\newtheorem{definition}{Definition}
\newtheorem{lemma}{Lemma}
\newtheorem{theorem}{Theorem}
\newtheorem{remark}{Remark}
\newtheorem{assumption}{Assumption}
\begin{document}
\begin{frontmatter}

\title{Contraction Analysis of Filippov Solutions in Multi-Modal Piecewise Smooth Systems} 

\author[First]{Zonglin Liu} 
\author[First]{Kyra Borchhardt} 
\author[First]{Olaf Stursberg} 

\address[First]{Control and System Theory, Dept. of Electrical Engineering and Computer Science, University of Kassel, Germany.\\
	Email: z.Liu@uni-kassel.de,  uk089537@student.uni-kassel.de, stursberg@uni-kassel.de.}

\begin{abstract}
This paper provides conditions to ensure contractive behavior of Filippov solutions generated by multi-modal piecewise smooth (PWS) systems. These conditions are instrumental in analyzing the asymptotic behavior of PWS systems, such as convergence towards an equilibrium point or a limit cycle.
 The work is motivated by a known principle for contraction analysis of bimodal PWS systems which ensures that the flow dynamics of each mode and the sliding dynamics on the switching manifold are contracting. This approach is extended first to PWS systems  with multiple non-intersecting switching manifolds in  $\mathbb{R}^n$, and then to two intersecting switching manifolds in  $\mathbb{R}^2$. Numerical examples are provided to validate  the theoretical findings, along with a discussion on extensions to more general intersecting switching manifolds in  $\mathbb{R}^n$.
\end{abstract}
	
\begin{keyword}
Contraction theory, Switching systems, Filippov solution, Regularization
\end{keyword}

\end{frontmatter}
\section{Introduction}\label{sec:intro}

A contractive dynamical system is characterized by the property that all trajectories converge to one another, regardless of the  initial states of the system. Contraction analysis can therefore be used to study various asymptotic properties, such as the stability of equilibrium points or periodic orbits, see \cite{sontag2010contractive}. In recent years, this property has also been utilized to design systems that share a desired stable limit cycle  by \cite{hanke2025approximation}, to develop coupling rules that achieve asymptotic synchronization by \cite{aminzare2020cluster}, and to synthesize tracking controllers and state observers by \cite{miljkovic2025reference}.

For systems described by smooth differential equations, a bound on the matrix measure of the Jacobian can be used to prove the contractivity of classical solutions \cite{lohmiller1998contraction, aminzare2014contraction}. Regarding non-smooth dynamics, the work in \cite{pavlov2007convergence} proposed a contraction condition for a class of piecewise affine systems with continuous vector fields on switching boundaries. Accounting for more general piecewise smooth (PWS) systems with non-continuous vector fields on  switching manifolds, the work by \cite{di2013incremental} established conditions for contracting Filippov solutions, including sliding motion which constitutes the main difference to standard solutions of smooth dynamics of planar PWS systems in $\mathbb{R}^2$. This result was later extended to bimodal PWS systems in $\mathbb{R}^n$ by \cite{di2014incremental}. The latter two papers ensure contraction of the sliding dynamics by combining the contracting local dynamics on both sides of the switching boundary. Facing the restriction that the sliding dynamics must be defined everywhere rather than only on the switching boundaries in those papers, the method of regularization (which is common to smoothen  discontinuous vector fields on switching boundaries in different contexts) was adopted in \cite{fiore2016contraction} to establish a contraction condition for Filippov solutions of bimodal PWS systems  in  $\mathbb{R}^n$.
For a class of multi-modal  PWS systems  in  $\mathbb{R}^n$ or for piecewise smooth continuous systems (PWSC) as defined in \cite{bernardo2008piecewise}, a condition ensuring contracting Caratheodory solutions is provided by \cite{di2014contraction} and applied to guarantee synchronization of a network of such systems.
For a class of hybrid dynamic systems, the work in \cite{tang2014transverse, burden2018contraction} proposed contractivity conditions for solutions in the case that the system states change impulsively through reset functions upon mode switching.

The literature review shows that conditions for contractivity of Filippov solutions in multi-modal PWS systems in $\mathbb{R}^n$ are not yet been established. A possible reason may be the non-uniqueness of the sliding vector at intersections of the switching boundaries (see \cite{kaklamanos2019regularization}), which complicates the evaluation of whether the sliding motion is contracting. After introducing some fundamentals of PWS and PWSC systems in  Sec.~2, the identified gap is addressed in Sec. 3 by proposing a scheme for contraction analysis of a class of multi-modal PWS systems in $\mathbb{R}^n$ with parallel switching boundaries, thus directly extending the case of bimodal systems in \cite{fiore2016contraction}. The focus is shifted in Sec.~4 to  a class of PWS systems in $\mathbb{R}^2$ with two intersecting switching boundaries in order  to provide initial insight into handling intersections in contraction analysis.  This result is afterwards generalized to PWS systems in  $\mathbb{R}^n$ with arbitrarily many intersecting switching boundaries. Sec.~5 presents two numerical examples of intersecting and nonintersecting switching manifolds, and the paper concludes in Sec.~6 with an outlook on future research.

\section{Contracting PWSC Systems and Filippov Solutions }\label{subsec:ContractivePWSC}

The class of piecewise smooth (PWS) systems  considered in this paper consists of a finite set  of ordinary differential equations:
\begin{align}\label{eq:generalPWSnew}
\dot{x}=F(x)=\left\{\begin{array}{cc}
  x \in \mathcal{S}_i \subset \mathbb{R}^n, i \in \mathcal{N}=\{1,\ldots, N\}:     \\
 \dot{x}=f_i(x)   
\end{array}\right\}   
\end{align}
in which the smooth vector fields $f_i(x)$ are defined on disjoint open sets  $\mathcal{S}_i$ and are continuously extendable to the closure (or boundary) of   $\mathcal{S}_i$. The union of   $\mathcal{S}_i$ and  its closure is denoted by $\bar{\mathcal{S}}_i$. The disjoint sets are separated by a set $\Omega$ of \textit{switching manifolds} $\Sigma_{i,j}= \bar{\mathcal{S}}_i \cap  \bar{\mathcal{S}}_j$
with codimension one. The union of $\Omega$ and all  $\mathcal{S}_i$,  $i \in \mathcal{N}$, covers the whole state space $\mathbb{R}^n$.
\begin{definition}\label{def:PWSC} (\cite{di2014contraction})
The PWS system \eqref{eq:generalPWSnew} is called \textit{piecewise smooth continuous} (PWSC) if:
\begin{enumerate}
\item $F(x)$ is continuous for all $x \in \mathbb{R}^n$,
\item and the $f_i(x)$,  $i \in \mathcal{N}$, are continuously differentiable for all $x \in  \mathcal{S}_i$ and the Jacobians $\frac{\partial f_i}{\partial x}(x)$ can be continuously extended to the closure of   $\mathcal{S}_i$.
\end{enumerate}
\end{definition}
Clearly, a PWSC system represents a special case of a PWS system in which the vector field on the switching manifold remains continuous.

For a positive definite matrix  $Q  \in \mathbb{R}^{n \times n} \succ 0$, the matrix norm $\mu_{Q}(A)$ for  $A \in \mathbb{R}^{n \times n}$  is defined by:
\begin{align}\label{eq:normproperty}
\mu_{Q}(A)=  \lambda_{max}(\frac{1}{2}(QAQ^{-1}+Q^{-1}A^TQ))
\end{align}
in which $\lambda_{max}$ denotes the largest eigenvalue.
It is known that $\mu_{Q}(cA) = c \mu_{Q}(A)$ for all  $c\ge 0$ and $\mu_{Q}(A+B) \le \mu_{Q}(A) +  \mu_{Q}(B)$ (positive homogeneity and subadditivity), see \cite{fiore2016contraction}. 
Based on $\mu_{Q}(A)$, the following theorem formulates a contractivity condition for PWSC systems:
 \begin{theorem}\label{theorem:PWSCcontractive}(\cite{di2014contraction})
Assume that  \eqref{eq:generalPWSnew} is PWSC and has a  unique \textit{Caratheodory} solution\footnote{See \cite{corts2008discontinuous}  for the conditions guaranteeing the existence and uniqueness of \textit{Caratheodory} and \textit{Filippov} solutions.}. Let $\mathcal{C} \subseteq  \mathbb{R}^n$ be a forward invariant set of   \eqref{eq:generalPWSnew}. If there exist $Q  \succ 0$ and $c >0$ such that:
\begin{align}\label{eq:contractivecondition}
\mu_{Q}(\frac{\partial f_i}{\partial x}(x)) \le -c
\end{align}
holds for all  $x \in \bar{\mathcal{S}}_i$, $i \in  \mathcal{N}$, then there  exist for any pair of different initial states $x_a(0), x_b(0) \in \mathcal{C}$ Caratheodory solutions $x_a(t)$ and $x_b(t)$ which satisfy:
\begin{align}\label{eq:contractivecondition2}
||x_a(t) -x_b(t)||_2 \le \alpha e^{-c t} ||x_a(0) -x_b(0)||_2
\end{align}
for an $ \alpha  >0$ and for all $t \ge 0$. The  PWSC system  \eqref{eq:generalPWSnew} is then called contractive with rate $c$ in $\mathcal{C}$.

\end{theorem}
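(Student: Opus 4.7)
The plan is to execute the standard matrix-measure contraction estimate in the weighted norm $\|v\|_Q:=\|Qv\|_2$, carefully accounting for the fact that $F$ is only piecewise $C^1$. Let $x_a$ and $x_b$ denote the unique Caratheodory solutions of \eqref{eq:generalPWSnew} emanating from $x_a(0),x_b(0)\in\mathcal{C}$, set $\delta(t):=x_a(t)-x_b(t)$, and define the candidate Lyapunov-type function $V(t):=\|Q\delta(t)\|_2^2$. Forward invariance of $\mathcal{C}$ keeps the trajectories inside, and absolute continuity of Caratheodory solutions implies that $V$ is absolutely continuous with $\dot V(t)=2\delta(t)^T Q^2(F(x_a(t))-F(x_b(t)))$ for almost every $t\ge0$.

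The key step is to express the increment of $F$ along $\delta$ as an averaged Jacobian. For a.e.~$t$ I would parameterise the segment $\gamma(\theta):=x_b(t)+\theta\delta(t)$, $\theta\in[0,1]$, and invoke generic transversality to argue that $\gamma$ meets the codimension-one switching manifolds at only finitely many $\theta$-values. Since $F$ is globally continuous and each $f_i$ admits a $C^1$-extension to $\bar{\mathcal{S}}_i$, the map $\theta\mapsto F(\gamma(\theta))$ is continuous and piecewise $C^1$, hence absolutely continuous, and the fundamental theorem of calculus delivers
\begin{equation*}
F(x_a(t))-F(x_b(t))=\Bigl(\int_0^1 J(\theta)\,d\theta\Bigr)\delta(t),
\end{equation*}
where $J(\theta)$ is the continuously extended Jacobian of the mode containing $\gamma(\theta)$, defined for almost every $\theta\in[0,1]$.

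Combining this with the elementary identity
\begin{equation*}
\delta^T Q^2 A\,\delta=\tfrac{1}{2}(Q\delta)^T\bigl(QAQ^{-1}+Q^{-1}A^T Q\bigr)(Q\delta)\le \mu_Q(A)\,\|Q\delta\|_2^2,
\end{equation*}
which follows directly from \eqref{eq:normproperty}, and with the hypothesis $\mu_Q(\partial f_i/\partial x)\le -c$ on every closure $\bar{\mathcal{S}}_i$, I obtain $\dot V(t)\le -2c\,V(t)$ for almost every $t$. Gr\"onwall's inequality then yields $\|Q\delta(t)\|_2\le e^{-ct}\|Q\delta(0)\|_2$, and converting back to the Euclidean norm via $\lambda_{\min}(Q)\|v\|_2\le\|Qv\|_2\le\lambda_{\max}(Q)\|v\|_2$ produces \eqref{eq:contractivecondition2} with $\alpha=\lambda_{\max}(Q)/\lambda_{\min}(Q)$.

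I expect the main obstacle to be the legitimacy of the integral representation in non-generic situations in which $\gamma$ meets a switching manifold tangentially or along a set of positive $\theta$-measure. To handle this I would perturb the endpoints so that the connecting segment intersects each $\Sigma_{i,j}$ only transversally at finitely many points, apply the estimate there, and pass to the limit using continuity of $F$ together with the fact that the bound \eqref{eq:contractivecondition} holds uniformly on the closures, so that the right-hand side of the $\dot V$-inequality is stable under such perturbations. Uniqueness of the Caratheodory solution rules out branching at switching-manifold encounters, closing the argument.
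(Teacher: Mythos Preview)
The paper does not prove this statement: Theorem~\ref{theorem:PWSCcontractive} is quoted from \cite{di2014contraction} as a known result and serves only as a tool for the subsequent development, so there is no in-paper proof to compare against. Your argument is the standard matrix-measure contraction proof one finds in that reference and its predecessors (Lohmiller--Slotine, Pavlov et al., Sontag): weighted-norm Lyapunov function, line-integral representation of $F(x_a)-F(x_b)$, pointwise bound via $\mu_Q$, Gr\"onwall, and norm equivalence. The PWSC hypothesis---global continuity of $F$ together with $C^1$-extendability of each $f_i$ to $\bar{\mathcal{S}}_i$---is exactly what makes $\theta\mapsto F(\gamma(\theta))$ absolutely continuous and hence legitimises the integral representation; your treatment of the non-transversal case by perturbation and passage to the limit using the uniform bound on the closures is the usual way to close that technicality. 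The proof is correct.
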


\subsection{Filippov Convention}\label{subsec:Filipov}

For PWS systems \eqref{eq:generalPWSnew} with discontinuous vector field on the switching manifold, the \textit{Filippov convention} is often carried out to determine Filippov solutions of  \eqref{eq:generalPWSnew}, see \cite{filippov2013differential}. Let the switching manifolds $\Sigma_{i,j} \in \Omega$ separating  $\mathcal{S}_i$ and  $\mathcal{S}_j$ have co-dimension 1, and let:
\begin{align}\label{eq:Sigma}
\Sigma_{i,j}:=\{x \in  \mathbb{R}^n ~| ~H_{i,j}(x)=0 \}
\end{align}
for a smooth function $H_{i,j}(x): \mathbb{R}^n \to \mathbb{R}$, with  $\mathcal{L}_{f_i}H_{i,j}(x):= \nabla H_{i,j}(x) f_i(x)$ representing the \textit{Lie derivaive} of $H_{i,j}(x)$ with respect to $f_i(x)$.

\begin{assumption}\label{assum:lierequirement}
Given a  set  $\mathcal{C} \subseteq  \mathbb{R}^n$, the inequalities
 $\mathcal{L}_{f_i}H_{i,j}(x)\ne 0$  or $\mathcal{L}_{f_j}H_{i,j}(x) \ne 0$ hold for all $x \in \Sigma_{i,j} \cap \mathcal{C}$.
\end{assumption}
Under this assumption and with $\mathcal{S}_i \subset  \{x ~| ~H_{i,j}(x)<0 \}$ and $\mathcal{S}_j \subset  \{x ~| ~H_{i,j}(x)>0 \}$, the   switching manifold $\Sigma_{i,j}$  in $\mathcal{C}$ can be decomposed into 
 the following three regions\footnote{Any $x$ satisfying $ \mathcal{L}_{f_i}H_{i,j}(x) \ne 0$ and $  \mathcal{L}_{f_j}H_{i,j}(x)=0$, or vice versa, belongs to the \textit{Sliding region} according to \cite{kuznetsov2003one}.}:  (1) \textit{crossing region} $\Sigma_c:= \{x | \mathcal{L}_{f_i}H_{i,j}(x) \mathcal{L}_{f_j}H_{i,j}(x)  \hspace{-0.5mm} > \hspace{-0.5mm}0 \}$; (2) \textit{sliding region} $\Sigma_s := \{x  |  \mathcal{L}_{f_i}H_{i,j}(x) \hspace{-0.5mm}> \hspace{-0.5mm}0,   \mathcal{L}_{f_j}H_{i,j}(x)  \hspace{-0.5mm} < \hspace{-0.5mm}0 \}$, and  (3) \textit{escaping region}  $\Sigma_e:=\{x  |  \mathcal{L}_{f_i}H_{i,j}(x) \hspace{-0.5mm}< \hspace{-0.5mm}0,  \mathcal{L}_{f_j}H_{i,j}(x) \hspace{-0.5mm} >0 \}$.
For any $x \in \Sigma_{i,j}$, provided it is not located on any other switching manifold,  the Filippov solution of   \eqref{eq:generalPWSnew}  crosses $\Sigma_{i,j}$  when  $x$ belongs to the  \textit{Crossing region}, while sliding along $\Sigma_{i,j}$ if it belongs to the \textit{Sliding region}. The sliding vector $f^s(x)$ is determined by the  \textit{Filippov convention}:
\begin{align}\label{eq:convention}
f^s(x) \hspace{-0.3mm} = \hspace{-0.3mm}(1  \hspace{-0.3mm}-  \hspace{-0.3mm}\lambda) f_i(x)  \hspace{-0.3mm}+  \hspace{-0.3mm}\lambda f_j(x), ~\nabla H_{i,j}(x) f^s(x)  \hspace{-0.3mm}=0
\end{align}
with $\lambda  \in [0,1]$. For  $x$ in the escaping region, the Filippov solution is not unique, i.e. this case is excluded from the  upcoming analysis.

\section{PWS Systems with Non-Intersecting Switching Manifolds in  $\mathbb{R}^n$ }\label{sec:PWSPSM}

Consider  the  PWS system \eqref{eq:generalPWSnew}
with $\Omega$ containing in total $N-1$  non-intersecting switching manifolds:
\begin{align}\label{eq:SigmaPM}
\Sigma_{i,i+1}:=\{x ~ | ~ H_{i,i+1}(x)=0 \} \in \Sigma,  ~ i\in\{1,\dots, N-1\}.
\end{align}
For the special case in which $H_{i,i+1}(x)=C_{i,i+1}x - d_{i,i+1}$, the switching manifolds represent a set of parallel hyperplanes, see Fig.~\ref{fig:partition}.
 The sets $\mathcal{S}_i$ are  given by:
\begin{align*}
\mathcal{S}_i = \begin{cases}
 \{x  ~| ~H_{1,2}(x)<0 \},~i=1 \\
  \{x | H_{i-1,i}(x)>0,  H_{i,i+1}(x)<0\}, i\in\{2,\ldots,N-1\} \\
  \{x  ~| ~H_{N-1,N}(x)>0 \},~i=N.
    \end{cases}
\end{align*}
To study the contraction property of the  Filippov solution of \eqref{eq:generalPWSnew}, the method of regularization is adopted: First of all, the discontinuous vector field along each switching manifold is smoothly approximated using a transition function:
\begin{align}\label{eq:transitonfunction}
\phi(s)=  \begin{cases}
1, ~~~~\textit{if}~~s \ge 1\\
s,~~~~\textit{if }~~s\in (-1, 1) \\
-1,~~~~\textit{if}~~s \le -1
    \end{cases}
\end{align}
For this function, $\phi'(s)=\frac{\partial \phi}{\partial s} \ge 0$ holds for all three intervals $(-\infty, -1)$, $(-1, 1)$, and $(1, \infty)$, and it can be continuously extended to the closure of these intervals.

\begin{definition}
The $\phi$-regularization of the  PWS system \eqref{eq:generalPWSnew}  with the switching manifolds in \eqref{eq:SigmaPM} is defined by:
\begin{align}\label{eq:regularizeddynamics}
&\dot{x}=F_\epsilon (x)=\frac{1}{2} \bigg(\hspace{-0.7mm}  \sum\limits_{i=2}^{N-1}(\phi(\frac{H_{i-1,i}(x)}{\epsilon}) \hspace{-0.5mm} - \hspace{-0.5mm}  \phi(\frac{H_{i,i+1}(x)}{\epsilon}))f_i(x)  \notag\\
&\hspace{-0.3mm}+ \hspace{-0.8mm}( \hspace{-0.5mm}1 \hspace{-0.5mm}- \hspace{-0.5mm}\phi(\frac{H_{1,2}(\hspace{-0.3mm}x\hspace{-0.3mm})}{\epsilon}\hspace{-0.3mm})\hspace{-0.3mm})f_1 \hspace{-0.3mm}(x) \hspace{-0.7mm} + \hspace{-0.7mm} ( \hspace{-0.5mm}1 \hspace{-0.7mm}+ \hspace{-0.7mm}\phi(\frac{H_{N-1,N}(\hspace{-0.3mm}x_\epsilon\hspace{-0.3mm})}{\epsilon}\hspace{-0.3mm})\hspace{-0.3mm})f_N \hspace{-0.3mm}(x)  \hspace{-0.9mm}\bigg) 
\end{align}
for $0<\epsilon \ll 1$.
In addition, a set $\Omega_r:= \{\mathcal{S}^{1,2}_\epsilon, \ldots, \mathcal{S}^{N-1,N}_\epsilon \}$ of  \textit{regions of regularization} are  defined by:
\begin{align}\label{eq:regularizationset}
\mathcal{S}^{i,i+1}_\epsilon:= \hspace{-0.5mm}\{x | -\epsilon \hspace{-0.5mm}<\hspace{-0.5mm} H_{i,i+1}(x) \hspace{-0.5mm} < \hspace{-0.5mm}\epsilon \},~i =1,\ldots, N \hspace{-0.5mm}- \hspace{-0.5mm}1.
\end{align}
\end{definition}
It can be observed from  Fig.~\ref{fig:partition} that the state space $\mathbb{R}^n$, which was originally partitioned by $H_{i,i+1}(x)=0$ into $N$ regions in \eqref{eq:generalPWSnew}, is now partitioned by $H_{i,i+1}(x)=\pm \epsilon$ into $2N-1$ regions on which the regularized dynamics \eqref{eq:regularizeddynamics} is defined:
\begin{align*}
F_\epsilon (x) \hspace{-0.7mm}= \hspace{-0.7mm} \begin{cases}
f_1(x),~x \in \mathcal{S}_1 \setminus \mathcal{S}^{1,2}_\epsilon\\
f_i(x), x \in \mathcal{S}_i \hspace{-0.5mm} \setminus  \hspace{-0.5mm} \{\mathcal{S}^{i-1,i}_\epsilon \hspace{-0.5mm}  \cup \hspace{-0.5mm}  \mathcal{S}^{i,i+1}_\epsilon\},  i =2,\ldots, N-1 \\
f_N(x),~x \in \mathcal{S}_N \setminus \mathcal{S}^{N-1, N}_\epsilon
    \end{cases}
    \end{align*}
Since $\phi(s)$ is continuous,  the regularized dynamics \eqref{eq:regularizeddynamics} is PWSC in all $2N-1$ regions according to Def.~\ref{def:PWSC}. (The extendability of the Jacobian  $\frac{\partial f_{i}}{\partial x} (x)$ to the closure of each region is analyzed later, below \eqref{eq:Jacobiancompute}.) The dynamics has a unique Caratheodory solution  $x_{\epsilon}(t)$ according to Proposition 1 in \cite{corts2008discontinuous}.

\begin{lemma} \label{lemma:regudifference}
Let the switching manifold $\Sigma_{i,i+1}$ of the  PWS system \eqref{eq:generalPWSnew}  satisfy  Assumption~\ref{assum:lierequirement} for   a  set $\mathcal{C} \subseteq  \mathbb{R}^n$. For 
any initial state $x(0) \in \mathcal{S}^{i,i+1}_\epsilon  \subseteq \mathcal{C}$ and any time  $t_f \ge  0$,  the Filippov solution  $x(t)$ of   \eqref{eq:generalPWSnew} satisfies  $x(t) \in \mathcal{S}^{i,i+1}_\epsilon$  for all $t \in [0, t_f]$, and:
\begin{align}\label{eq:distancereduction}
\lim_{\epsilon \to 0^+}||x(t) -x_{\epsilon}(t)|| = 0
    \end{align}
 holds uniformly between  $x(t)$ and  the  Caratheodory solution $x_{\epsilon}(t)$ of  \eqref{eq:regularizeddynamics}    for all $t \in [0, t_f]$.
\end{lemma}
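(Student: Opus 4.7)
The plan is to exploit the convex-combination structure of $F_\epsilon$ on the strip $\mathcal{S}^{i,i+1}_\epsilon$ and then mimic Filippov's classical regularization argument. First, I would evaluate \eqref{eq:regularizeddynamics} explicitly in $\mathcal{S}^{i,i+1}_\epsilon$: since the strips are pairwise disjoint for sufficiently small $\epsilon$, all $\phi$-values other than $\phi(H_{i,i+1}/\epsilon)$ are saturated at $+1$ (for $j<i$) or $-1$ (for $j>i$). Substituting these together with $\phi(H_{i,i+1}/\epsilon)=H_{i,i+1}/\epsilon$ makes every coefficient of $f_k$ with $k\notin\{i,i+1\}$ telescope to zero, leaving
\begin{equation*}
F_\epsilon(x)=(1-\lambda_\epsilon(x))f_i(x)+\lambda_\epsilon(x)f_{i+1}(x),\quad \lambda_\epsilon(x)=\tfrac{1}{2}\bigl(1+H_{i,i+1}(x)/\epsilon\bigr)\in(0,1),
\end{equation*}
which has exactly the form of the Filippov convention \eqref{eq:convention}.

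Next, I would establish the invariance $x(t)\in\mathcal{S}^{i,i+1}_\epsilon$ of the Filippov solution. The assertion for arbitrary $t_f\ge 0$ is meaningful only in the sliding scenario, in which Assumption~\ref{assum:lierequirement} yields $\mathcal{L}_{f_i}H_{i,i+1}>0$ and $\mathcal{L}_{f_{i+1}}H_{i,i+1}<0$, so that $x(t)$ stays on $\Sigma_{i,i+1}\subset\mathcal{S}^{i,i+1}_\epsilon$ by construction of $f^s$ in \eqref{eq:convention}. For $x_\epsilon(t)$, evaluating $\frac{d}{dt}H_{i,i+1}(x_\epsilon)=\mathcal{L}_{F_\epsilon}H_{i,i+1}$ at the strip boundaries $H_{i,i+1}=\pm\epsilon$ gives, respectively, $\mathcal{L}_{f_i}H_{i,i+1}>0$ and $\mathcal{L}_{f_{i+1}}H_{i,i+1}<0$, so the strip is positively invariant under $F_\epsilon$ as well.

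For the convergence \eqref{eq:distancereduction} I would apply a Tikhonov-type singular perturbation argument in the scaled variable $y=H_{i,i+1}(x_\epsilon)/\epsilon\in(-1,1)$. One has
\begin{equation*}
\dot{y}=\epsilon^{-1}\bigl[(1-\lambda_\epsilon)\mathcal{L}_{f_i}H_{i,i+1}+\lambda_\epsilon\mathcal{L}_{f_{i+1}}H_{i,i+1}\bigr],
\end{equation*}
whose fast flow is exponentially attracted to the unique equilibrium $\lambda^\ast=\mathcal{L}_{f_i}H_{i,i+1}/(\mathcal{L}_{f_i}H_{i,i+1}-\mathcal{L}_{f_{i+1}}H_{i,i+1})$, which is precisely the Filippov value singled out by \eqref{eq:convention}. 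On the slow manifold one therefore obtains $F_\epsilon(x_\epsilon)=f^s(x_\epsilon)+O(\epsilon)$, and local Lipschitz continuity of $f_i,f_{i+1}$ on $\mathcal{C}$ together with Gronwall's lemma transfers this closeness to trajectories, yielding \eqref{eq:distancereduction} uniformly on $[0,t_f]$.

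The main obstacle is the initial boundary layer: $x_\epsilon(0)=x(0)$ may lie far from the slow manifold, and during the transient of length $O(\epsilon|\ln\epsilon|)$ the fields $F_\epsilon$ and $f^s$ can differ by an $O(1)$ amount. The standard remedy is to split $[0,t_f]$ into this boundary layer, where the trajectory error is controlled by layer length times field magnitude (hence $o(1)$ in $\epsilon$), and its complement, where the slow-manifold estimate applies. Combining the two is the only delicate technical step and follows the classical template in \cite{filippov2013differential} without requiring further structural assumptions.
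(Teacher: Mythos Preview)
Your proposal is correct and follows essentially the same route as the paper: both reduce $F_\epsilon$ on the strip $\mathcal{S}^{i,i+1}_\epsilon$ to a convex combination of $f_i$ and $f_{i+1}$ via saturation of the remaining $\phi$-terms, then perform a singular-perturbation (slow--fast) analysis in the scaled variable $H_{i,i+1}/\epsilon$ to identify the critical manifold with the Filippov sliding dynamics \eqref{eq:convention}. Your treatment is in fact slightly more explicit than the paper's---you spell out the strip invariance for both $x(t)$ and $x_\epsilon(t)$ and the boundary-layer splitting, whereas the paper works under the simplifying coordinate choice $H_{i,i+1}(x)=x_{[1]}-d_{i,i+1}$ and defers the remaining analytic details to \cite{fiore2016contraction} and \cite{llibre2008sliding}.
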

\begin{proof}
The proof follows the same path as in   \cite{fiore2016contraction} for Lemma 2 or  in \cite{llibre2008sliding} for Theorem 1.1. To show \eqref{eq:distancereduction}, singular perturbation analysis is carried out to cast the regularized  dynamics   \eqref{eq:regularizeddynamics}  into a slow-fast system for each switching manifold
$\Sigma_{i,i+1}$, $i  =1,\dots, N-1$. To briefly review the idea, let $x_{[l]}$, $l=1,\ldots, n$, denote the $j$-th element of $x$ and assume that:
\begin{align*}
H_{i,i+1}(x)=x_{[1]}-d_{i, i+1}
    \end{align*}
    for all $i =1,\ldots, N-1$ (see Fig.~\ref{fig:partition}, while the proof for more general $H_{i,i+1}(x)$ can be found in \cite{kaklamanos2019regularization}). By introducing the transformation:
\begin{align}\label{eq:statetransform}
\hat{x}_{[1]}:= \frac{1}{\epsilon} (x_{[1]} - d_{i, i+1}),
\end{align} 
 the state vector $x=[x_{[1]},\ldots, x_{[n]}]^T$ is cast into:
\begin{align}\label{eq:statevectortransform}
x =[\epsilon\hat{x}_{[1]} +  d_{i, i+1}, x_{[2]},\ldots, x_{[n]}]^T.
\end{align} 
Since  the regions of regularization in $\Omega_r$  do not intersect with each other, there  exists:
\begin{align}\label{eq:regularizeddynamicsequaton}
&\phi(\frac{H_{j,j+1}(x)}{\epsilon}) = \begin{cases}
1, ~\forall  j =1, \ldots, i-1 \\
-1, ~\forall  j =i+1, \ldots, N-1
    \end{cases}.
    \end{align}
for any  $x \in \mathcal{S}^{i,i+1}_\epsilon$.    
    Based on \eqref{eq:regularizeddynamicsequaton}, the regularized dynamics \eqref{eq:regularizeddynamics}  can  be cast into a  slow-fast system with:
\begin{align}\label{eq:regularizeddynamics2}
\epsilon \dot{\hat{x}}_{[1]}  \hspace{-0.8mm} = \hspace{-0.8mm} \frac{1}{2} \bigg(  \hspace{-0.9mm}(1 \hspace{-0.8mm}- \hspace{-0.8mm} \phi(\hat{x}_{[1]}))f_{i, [1]}\hspace{-0.3mm}(x)  \hspace{-0.8mm}+ \hspace{-0.8mm} (1 \hspace{-0.8mm} + \hspace{-0.8mm} \phi(\hat{x}_{[1]}))f_{i  \hspace{-0.3mm}+\hspace{-0.3mm}1, [1]}\hspace{-0.3mm}(x)  \hspace{-0.9mm} \bigg)
\end{align}
for the dynamics of the fast variables, and for the slow ones:
\begin{align}\label{eq:regularizeddynamics3}
\dot{x}_{[l]} \hspace{-0.7mm}= \hspace{-0.7mm}\frac{1}{2} \bigg(\hspace{-0.7mm}(1 \hspace{-0.7mm}- \hspace{-0.7mm}\phi(\hat{x}_{[1]}))f_{i, [l]}(x)\hspace{-0.7mm} +\hspace{-0.7mm} (1 \hspace{-0.7mm}+ \hspace{-0.7mm}\phi(\hat{x}_{[1]}))f_{i+1, [l]}(x) \hspace{-0.7mm}\bigg)
\end{align}
for all $l =2,\ldots, n$. For $\hat{x}_{[1]}$, the fixed point $\bar{\hat{x}}_{[1]}$ (i.e. the critical manifold of the boundary-layer model) of \eqref{eq:regularizeddynamics2}  at the  limit $\epsilon  \to 0^+$ must satisfy:
\begin{align}\label{eq:regularizeddynamics7}
\phi(\bar{\hat{x}}_{[1]}) = - \frac{f_{i+1, [1]}(x)  + f_{i, [1]}(x)  }{f_{i+1, [1]}(x)  - f_{i, [1]}(x)  }
\end{align}
due to Assumption~\ref{assum:lierequirement}, and thus:
\begin{align}\label{eq:fixpoints}
\bar{\hat{x}}_{[1]}=h_{i, i+1}(x_{[2]}, \ldots, x_{[n]}) 
\end{align}
must hold for some function $h_{i, i+1}: \mathbb{R}^{n-1} \to \mathbb{R}$ according to the implicit function theorem, see \cite{krantz2002implicit}. Define:
\begin{align}\label{eq:reducedvector}
x_{rd} := [h_{i, i+1}(x_{[2]}, \ldots, x_{[n]}) , x_{[2]},\ldots, x_{[n]}]^T
\end{align}
and by  inserting \eqref{eq:regularizeddynamics7} into \eqref{eq:regularizeddynamics3}, the dynamics of the \textit{reduced problem} on the  critical manifold of the boundary-layer model is given by:
\begin{align}\label{eq:regularizeddynamics8}
\dot{x}_{[l]}=&\frac{f_{i+1, [1]}(x_{rd}) }{f_{i+1, [1]}(x_{rd})  - f_{i, [1]}(x_{rd})} f_{i, [l]}(x_{rd}) \notag \\
&  + (1-\frac{f_{i+1, [1]}(x_{rd}) }{f_{i+1, [1]}(x_{rd})  - f_{i, [1]}(x_{rd})})  f_{i+1, [l]}(x_{rd})
\end{align}
for all $l =2,\ldots, n$.  Note that the dynamics  in \eqref{eq:regularizeddynamics8} coincides with the sliding dynamics  in \eqref{eq:convention} of   the Filippov solution  $x(t)$ of \eqref{eq:generalPWSnew} on the switching manifold $\Sigma_{i,i+1}$. 
For $\epsilon \to 0^+$,  since $\lim_{\epsilon \to 0^+}  \mathcal{S}^{i,i+1}_\epsilon =\Sigma_{i,i+1}$, the motion of $x_{\epsilon}(t)$  of  \eqref{eq:regularizeddynamics} on  $\Sigma_{i,i+1}$ is governed solely by the motion of the slow variables. As a result, the motions of  $x_{\epsilon}(t)$ and $x(t)$  on  $\Sigma_{i,i+1}$ are identical for  $\epsilon \to 0^+$, and therefore \eqref{eq:distancereduction}  must  hold uniformly.
\hfill$\Box$ 
\end{proof}

\begin{figure}[t!]
  \begin{center}
		\psfrag{x1}[rc][rc][1]{$x_{[1]}$}
		\psfrag{x2}[rc][rc][1]{$x_{[2]}$}
			\psfrag{e}[rc][rc][1]{$\epsilon$}
    \psfrag{h12}[rc][rc][1]{$H_{1,2}(x)=0$}			
	 \psfrag{h23}[rc][rc][1]{$H_{2,3}(x)=0$}	
	 \psfrag{h34}[rc][rc][1]{$H_{3,4}(x)=0$}							

    \psfrag{s1}[rc][rc][1]{$\mathcal{S}^{1,2}_\epsilon$}
        \psfrag{s2}[rc][rc][1]{$\mathcal{S}^{2,3}_\epsilon$}		
            \psfrag{s3}[rc][rc][1]{$\mathcal{S}^{3,4}_\epsilon$}				
					
			\psfrag{f1}[rc][rc][1]{$\dot{x}=f_1(x)$}				
			\psfrag{f2}[rc][rc][1]{$\dot{x}=f_2(x)$}		
		\psfrag{f3}[rc][rc][1]{$\dot{x}=f_3(x)$}				
		
					\psfrag{t1}[rc][rc][0.8]{$\dot{x}=f_1(x)$}				
			\psfrag{t2}[rc][rc][0.8]{$\dot{x}=f_2(x)$}		
		\psfrag{t3}[rc][rc][0.8]{$\dot{x}=f_3(x)$}					
					
 \includegraphics[width = 0.48\textwidth]{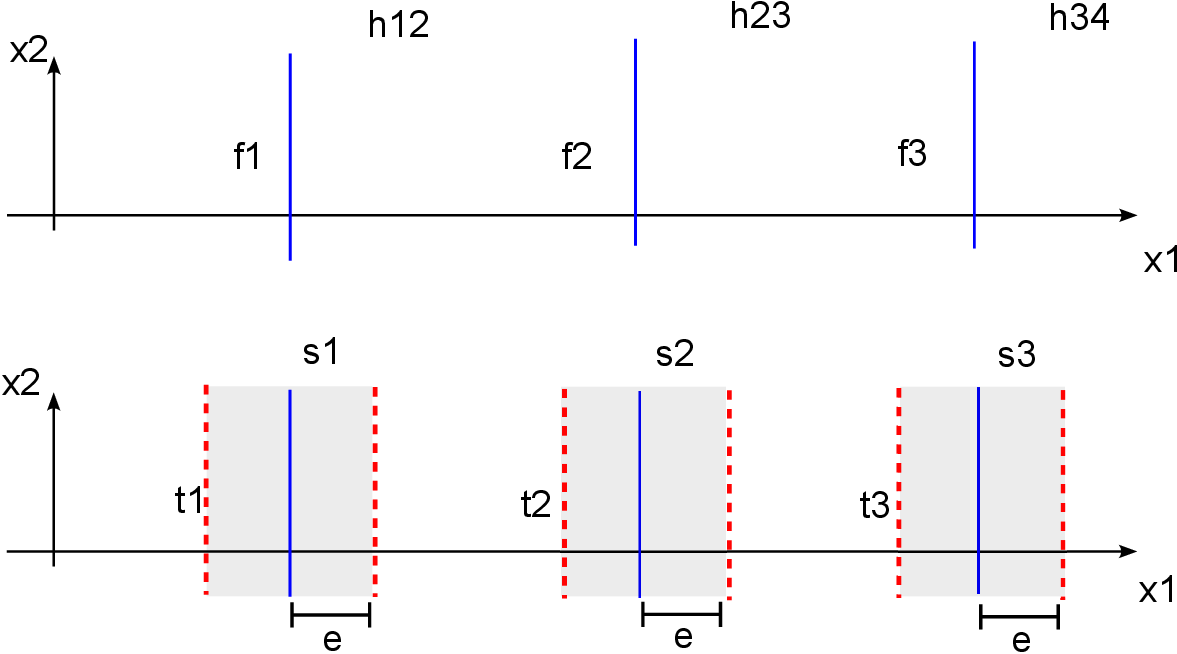}
    \caption{Each switching manifold $\Sigma_{i,i+1}$ of  \eqref{eq:generalPWSnew} is  expanded to generate the region $\mathcal{S}^{i,i+1}_\epsilon$ according to \eqref{eq:regularizationset}. }
    \label{fig:partition}
  \end{center}
\end{figure}

Based on Lemma~\ref{lemma:regudifference}, the following fact can also be established:
\begin{lemma} \label{lemma:regudifferenceglobal}
Assume that  all switching boundaries in  $\Omega$ satisfy  Assumption~\ref{assum:lierequirement}  for a common  forward invariant set  $\mathcal{C} \subseteq  \mathbb{R}^n$   of   \eqref{eq:generalPWSnew}. The Filippov solution $x(t)$ of \eqref{eq:generalPWSnew} and the Caratheodory solution $x_{\epsilon}(t)$ of   \eqref{eq:regularizeddynamics} satisfy for any initial state $x(0) \in \mathcal{C}$:
\begin{align}\label{eq:distancereductionglobal}
\lim_{\epsilon \to 0^+}||x(t) -x_{\epsilon}(t)|| = 0
    \end{align}
  uniformly for all $t \ge 0$.
\end{lemma}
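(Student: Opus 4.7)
The plan is to promote Lemma~\ref{lemma:regudifference} from the setting of a single regularization strip to the whole forward-invariant set $\mathcal{C}$ by a concatenation argument along the sequence of regions traversed by the Filippov trajectory $x(t)$. Because the switching manifolds in $\Omega$ are non-intersecting, for every initial condition $x(0) \in \mathcal{C}$ the time axis $[0, \infty)$ splits into successive intervals $I_k = [t_k, t_{k+1})$ of two alternating types: (i) \emph{interior intervals}, on which $x(t)$ lies in the interior of some $\mathcal{S}_i$ and is strictly separated from every $\Sigma_{j,j+1}$, and (ii) \emph{manifold intervals}, on which $x(t)$ lies on exactly one switching manifold $\Sigma_{i,i+1}$, either crossing transversally through the crossing region or sliding within the sliding region.

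On an interior interval $I_k$, once $\epsilon$ is small enough that $x(I_k)$ is disjoint from every regularization strip $\mathcal{S}^{j,j+1}_\epsilon$, the definition of $F_\epsilon$ in \eqref{eq:regularizeddynamics} gives $F_\epsilon(x(t)) = f_i(x(t))$, so the Filippov and Caratheodory trajectories obey the same smooth ODE; a standard Gronwall estimate then yields $\|x(t) - x_\epsilon(t)\| \le e^{L_i (t - t_k)} \|x(t_k) - x_\epsilon(t_k)\|$, where $L_i$ is a Lipschitz constant of $f_i$ on $\mathcal{C}$. On a manifold interval, Lemma~\ref{lemma:regudifference} applies directly to the portion of $x(t)$ that enters $\mathcal{S}^{i,i+1}_\epsilon$, yielding $\|x(t) - x_\epsilon(t)\| \to 0$ uniformly on $I_k$ as $\epsilon \to 0^+$, and the time spent inside the strip during a transversal crossing itself vanishes with $\epsilon$ since the strip has width $2\epsilon$.

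I would then chain these two estimates inductively across the sequence $\{I_k\}$: assuming $\|x(t_k) - x_\epsilon(t_k)\| \to 0$ as $\epsilon \to 0^+$, the preceding two bounds propagate the same statement to $t_{k+1}$ and throughout $I_{k+1}$. Since $x(0) = x_\epsilon(0)$, this immediately secures the uniform-in-$t$ limit on every compact interval $[0, t_f]$, using only forward invariance of $\mathcal{C}$ (for uniform Lipschitz bounds on the $f_i$) and Assumption~\ref{assum:lierequirement} (to rule out escaping behavior and guarantee the classification of $\Sigma_{i,i+1}$ into crossing/sliding subregions).

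The main obstacle is upgrading this finite-horizon uniformity to uniformity over the whole half-line $t \ge 0$, since the multiplicative factors $e^{L_i |I_k|}$ can in principle compound across infinitely many intervals. To close this gap I would exploit two structural features: first, the compactness of $\mathcal{C}$ (implicit in forward invariance) combined with a uniform dwell-time lower bound between consecutive entries into different regularization strips, which prevents Zeno-like accumulation of mode transitions; and second, the fact that $\bigcup_i \mathcal{S}^{i,i+1}_\epsilon$ shrinks to $\bigcup_i \Sigma_{i,i+1}$ as $\epsilon \to 0^+$, so that the per-transit perturbation induced by regularization vanishes uniformly in $k$. Making the interplay between these two observations rigorous---so that the cumulated error remains $o(1)$ uniformly in $t$---is the delicate step, and is the only place where the analysis of Lemma~\ref{lemma:regudifferenceglobal} genuinely goes beyond the local singular-perturbation argument already carried out in Lemma~\ref{lemma:regudifference}.
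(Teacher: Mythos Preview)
Your approach is essentially the same as the paper's: decompose the trajectory into the parts inside the regularization strips $\mathcal{S}^{i,i+1}_\epsilon$ (where Lemma~\ref{lemma:regudifference} applies) and the parts outside (where $F_\epsilon$ coincides with the relevant $f_i$), and conclude that $x(t)$ and $x_\epsilon(t)$ agree in the limit. The paper's own proof is far more terse---it simply observes that the two solutions ``coincide'' inside each strip as $\epsilon\to 0^+$ by Lemma~\ref{lemma:regudifference} and are governed by the identical vector field outside, hence coincide everywhere in~$\mathcal{C}$; no Gronwall-type accounting of propagated errors is attempted, and the infinite-horizon issue you identify is not discussed. Your elaboration is therefore more careful than the paper's argument, not different from it in substance.

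One genuine slip: compactness of $\mathcal{C}$ is \emph{not} implicit in forward invariance, so you cannot invoke it to obtain uniform Lipschitz constants or dwell-time bounds without adding it as an explicit hypothesis. The paper sidesteps this entirely by arguing that the limiting trajectories are identical rather than merely close, which (at the level of rigor the paper adopts) avoids the compounding-error problem you are worried about.
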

\begin{proof}
According to Lemma~\ref{lemma:regudifference} the courses of   $x(t)$ and $x_{\epsilon}(t)$ coincide within each regularization region $\mathcal{S}^{i,i+1}_\epsilon \in \Omega_r$,  $i  =1,\dots, N-1$, 
as $\epsilon \to 0^+$. Based on the definition of $F_\epsilon (x)$, the trajectories are also identical  outside of the  regions of regularization, and thus they coincide everywhere in $\mathbb{R}^n$. Consequently, \eqref{eq:distancereductionglobal} holds  for all  $t \ge 0$ if the state is initialized to the forward invariant set: $x(0)\in\mathcal{C}$. \hfill$\Box$ 
\end{proof}

According to \eqref{eq:distancereductionglobal}, the contraction analysis of the  Filippov solution $x(t)$ of \eqref{eq:generalPWSnew} can be carried out by  examining the contraction properties of the regularized PWSC system  \eqref{eq:regularizeddynamics}. First,   the  Jacobian  of $F_\epsilon (x)$   is given by:
\begin{align}\label{eq:Jacobiancompute}
&\frac{\partial F_{\epsilon}}{\partial x} (x)  \hspace{-0.7mm}= \hspace{-0.7mm} \frac{1  \hspace{-0.7mm} -  \hspace{-0.7mm} \phi(\frac{H_{1,2}(x)}{\epsilon})}{2}  \frac{\partial f_{1}}{\partial x} (x)  \hspace{-0.7mm}+  \hspace{-0.7mm} \frac{1  \hspace{-0.7mm} +  \hspace{-0.7mm} \phi(\frac{H_{N  \hspace{-0.2mm}- \hspace{-0.2mm}1,N}(x)}{\epsilon})}{2} \frac{\partial f_{N}}{\partial x} (x) \notag \\
&~~~~~~~~+ \sum\limits_{i=2}^{N-1} \frac{\phi(\frac{H_{i-1,i}(x)}{\epsilon} ) -\phi(\frac{H_{i,i+1}(x)}{\epsilon} )}{2}\frac{\partial f_{i}}{\partial x} (x)  \notag \\
&~~~~~~~~+ \sum\limits_{i=1}^{N-1} \frac{\phi'(\frac{H_{i,i+1}(x)}{\epsilon} )}{2\epsilon}( f_{i+1}(x)- f_{i}(x))\nabla H(x).
\end{align}
Note that  $\frac{\partial f_{\epsilon}}{\partial x} (x)$ is not everywhere continuous  due to the term $\phi'(\frac{H_{i,i+1}(x)}{\epsilon})$. Nevertheless, it remains continuous (and non-negative) within each set  $\mathcal{S}_i  \setminus \{\bar{\mathcal{S}}^{i-1,i}_\epsilon  \cup   \bar{\mathcal{S}}^{i,i+1}_\epsilon\}$, 
as well as within each set  $\mathcal{S}^{i,i+1}_\epsilon$.
Moreover, the  Jacobian $\frac{\partial F_{\epsilon}}{\partial x} (x)$ can  be continuously extended to the closure of these sets  (see the properties of $\phi'(\cdot)$ discussed after \eqref{eq:transitonfunction}).
Thus, based on Theorem~\ref{theorem:PWSCcontractive}, the following conditions are  proposed to ensure that the solution of  \eqref{eq:regularizeddynamics} is contracting:
\begin{theorem}\label{theorem:regulcontractive}
Let $\mathcal{C} \subseteq  \mathbb{R}^n$ be a forward invariant set of  \eqref{eq:regularizeddynamics}.  If there exist  $Q \in \mathbb{R}^{n \times n} \succ 0$ and $c >0$ such that:
\begin{align}
&\mu_{Q}(  \frac{\partial f_{1}}{\partial x} (x)) \le -c,   ~~\forall x \in \mathcal{S}_1  \cup\bar{\mathcal{S}}^{1,2}_\epsilon \label{eq:contractivefinalcondition0} \\
&\mu_{Q}\hspace{-0.3mm} ( \hspace{-0.3mm} \frac{\partial f_{i}}{\partial x} (x) \hspace{-0.3mm} )  \hspace{-1.1mm}  \le  \hspace{-1mm}   -c, \hspace{-0.3mm} \forall  \hspace{-0.1mm}  x  \hspace{-0.7mm}  \in  \hspace{-0.7mm}  \bar{\mathcal{S}}^{i-1,i}_\epsilon  \hspace{-0.7mm}  \cup  \hspace{-0.7mm}  \mathcal{S}_i    \hspace{-0.7mm} \cup  \hspace{-0.7mm}  \bar{\mathcal{S}}^{i,i+1}_\epsilon\hspace{-0.3mm},  i  \hspace{-0.7mm}  =  \hspace{-0.7mm} 2  \ldots N \hspace{-1mm} -  \hspace{-0.9mm} 1 \label{eq:contractivefinalcondition1} \\
&\mu_{Q}(  \frac{\partial f_{N}}{\partial x} (x)) \le -c, \forall x \in \bar{\mathcal{S}}^{N-1,N}_\epsilon \cup \mathcal{S}_N \label{eq:contractivefinalcondition1NEW} \\
&\mu_{Q}( (f_{i+1}(x) \hspace{-0.7mm}  - \hspace{-0.7mm}  f_{i}(x))\nabla H(x) ) \hspace{-0.7mm}  \le \hspace{-0.7mm}  0,\forall x \hspace{-0.7mm}  \in \hspace{-0.7mm}  \bar{\mathcal{S}}^{i,i+1}_\epsilon,   i  \hspace{-0.7mm}  =  \hspace{-0.7mm} 1 \ldots N \hspace{-1mm} -  \hspace{-0.9mm} 1   \label{eq:contractivefinalcondition2}
\end{align}
then the regularized  PWSC dynamics \eqref{eq:regularizeddynamics} is contractive with rate $c$ in $\mathcal{C}$.
\end{theorem}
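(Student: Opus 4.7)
The plan is to invoke Theorem~\ref{theorem:PWSCcontractive} on the regularized system $F_\epsilon$. Since the paper has already verified that $F_\epsilon$ is PWSC on the finer partition consisting of the $2N-1$ closed regions delimited by $H_{i,i+1}(x)=\pm\epsilon$, and that $\partial F_\epsilon/\partial x$ extends continuously to the closure of each such region, it suffices to establish $\mu_Q(\partial F_\epsilon/\partial x)\le -c$ uniformly on each of these pieces.

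First, I would dispatch the pieces lying outside the regularization slabs. On any region of the form $\mathcal{S}_i \setminus \{\bar{\mathcal{S}}^{i-1,i}_\epsilon \cup \bar{\mathcal{S}}^{i,i+1}_\epsilon\}$ (with obvious modifications at the ends $i=1$ and $i=N$), every argument of $\phi$ appearing in \eqref{eq:Jacobiancompute} has absolute value at least one, so each $\phi(\cdot)$ is constant at $\pm 1$ and each $\phi'(\cdot)$ vanishes. The Jacobian collapses to $\partial f_i/\partial x$, and the required bound reduces to exactly the hypothesis \eqref{eq:contractivefinalcondition0}, \eqref{eq:contractivefinalcondition1}, or \eqref{eq:contractivefinalcondition1NEW}, restricted to the region in question.

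The main step is the analysis on each regularization slab $\mathcal{S}^{i,i+1}_\epsilon$. Because the switching manifolds are non-intersecting and are naturally ordered by $i$, every $x$ in this slab satisfies $\phi(H_{j,j+1}(x)/\epsilon)=+1$ for $j<i$ and $-1$ for $j>i$, while $\phi(H_{i,i+1}(x)/\epsilon)\in(-1,1)$ is the sole nontrivial transition value and $\phi'(H_{i,i+1}(x)/\epsilon)=1$. Substituting into \eqref{eq:Jacobiancompute} collapses all sums and leaves
\begin{equation*}
\frac{\partial F_\epsilon}{\partial x}(x) = \alpha(x)\frac{\partial f_i}{\partial x}(x) + \beta(x)\frac{\partial f_{i+1}}{\partial x}(x) + \frac{1}{2\epsilon}\bigl(f_{i+1}(x)-f_i(x)\bigr)\nabla H_{i,i+1}(x),
\end{equation*}
with $\alpha(x) = (1-\phi(H_{i,i+1}(x)/\epsilon))/2$ and $\beta(x) = 1-\alpha(x)$ both in $[0,1]$. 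Subadditivity and positive homogeneity of $\mu_Q$, combined with \eqref{eq:contractivefinalcondition1} (and its endpoint analogues) on the first two terms and \eqref{eq:contractivefinalcondition2} on the third, yield
\begin{equation*}
\mu_Q\!\left(\frac{\partial F_\epsilon}{\partial x}(x)\right) \le \alpha(x)(-c) + \beta(x)(-c) + \tfrac{1}{2\epsilon}\cdot 0 = -c.
\end{equation*}

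The hard part is the singular-looking rank-one term $(2\epsilon)^{-1}(f_{i+1}-f_i)\nabla H_{i,i+1}$: its matrix norm blows up as $\epsilon\to 0^+$, so it cannot be controlled pointwise. Condition \eqref{eq:contractivefinalcondition2} is engineered precisely so that the matrix measure of this term is nonpositive, after which positive homogeneity of $\mu_Q$ absorbs the $\epsilon^{-1}$ prefactor harmlessly rather than amplifying it. Once the estimate $\mu_Q(\partial F_\epsilon/\partial x)\le -c$ is checked on each of the $2N-1$ pieces, a direct application of Theorem~\ref{theorem:PWSCcontractive} to $F_\epsilon$ delivers contraction at rate $c$ on $\mathcal{C}$, completing the argument.
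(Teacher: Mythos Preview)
Your proposal is correct and follows essentially the same route as the paper: apply subadditivity and positive homogeneity of $\mu_Q$ to the Jacobian \eqref{eq:Jacobiancompute}, use that the convex coefficients in front of $\partial f_i/\partial x$ and $\partial f_{i+1}/\partial x$ are nonnegative and sum to one while the $\phi'$-term is handled by \eqref{eq:contractivefinalcondition2}, and then invoke Theorem~\ref{theorem:PWSCcontractive}. The only cosmetic difference is organizational: you first localize to each of the $2N-1$ regions and simplify the Jacobian there before applying subadditivity, whereas the paper applies subadditivity to the full expression \eqref{eq:Jacobiancompute} globally and then observes that all scalar prefactors are nonnegative (and that the convex weights telescope to one).
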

\begin{proof}
Based on the subadditivity of $\mu_{Q}$, there exists:
\begin{align}\label{eq:Jacobianmeasure}
&\mu_{Q}(\frac{\partial F_{\epsilon}}{\partial x} (x)) \le \mu_{Q}( \frac{1 -\phi(\frac{H_{1,2}(x)}{\epsilon})}{2}  \frac{\partial f_{1}}{\partial x} (x)) \notag \\
&+\mu_{Q}( \frac{1 +\phi(\frac{H_{N-1,N}(x)}{\epsilon})}{2} \frac{\partial f_{N}}{\partial x} (x)) \notag \\
&+ \sum\limits_{i=2}^{N-1} \mu_{Q}(\frac{\phi(\frac{H_{i-1,i}(x)}{\epsilon} ) -\phi(\frac{H_{i,i+1}(x)}{\epsilon} )}{2}\frac{\partial f_{i}}{\partial x} (x))  \notag \\
&+ \sum\limits_{i=1}^{N-1} \mu_{Q}(\frac{\phi'(\frac{H_{i,i+1}(x)}{\epsilon} )}{2\epsilon}( f_{i+1}(x)- f_{i}(x))\nabla H(x)).
\end{align}
As the inequalities:
\begin{align}\label{eq:Jacobianmeasure2}
&1 -\phi(\frac{H_{1,2}(x)}{\epsilon} ) \ge 0,~1 +\phi(\frac{H_{N-1,N}(x)}{\epsilon} ) \ge 0, \notag \\
&\phi(\frac{H_{i-1,i}(x)}{\epsilon} )-\phi(\frac{H_{i,i+1}(x)}{\epsilon} ) \ge 0,   i  \hspace{-0.7mm}  =  \hspace{-0.7mm} 2 \ldots N \hspace{-1mm} -  \hspace{-0.9mm} 1
\end{align}
 hold for all $x \in  \mathbb{R}^n$, and $\phi'(\cdot) \ge 0$ holds  within each set  $\mathcal{S}_i  \setminus \{\bar{\mathcal{S}}^{i-1,i}_\epsilon  \cup   \bar{\mathcal{S}}^{i,i+1}_\epsilon\}$
and  $\mathcal{S}^{i,i+1}_\epsilon$, the inequality $\mu_Q(\frac{\partial F_{\epsilon}}{\partial x} (x)) \le -c$ must apply within each of the $2N-1$ regions  of  \eqref{eq:regularizeddynamics}
and their closures. As a result and according to Theorem~\ref{theorem:PWSCcontractive}, the PWSC dynamics \eqref{eq:regularizeddynamics} must be contractive with rate $c$ in $\mathcal{C}$.  \hfill$\Box$

\end{proof}
Based on  Lemma~\ref{lemma:regudifferenceglobal} and Theorem~\ref{theorem:regulcontractive}, the following conditions are proposed to ensure that the  Filippov solutions  of   \eqref{eq:generalPWSnew} are contracting:
\begin{theorem}\label{theorem:originPWScontractive}
Assume that  all switching boundaries in \eqref{eq:SigmaPM}  satisfy  Assumption~\ref{assum:lierequirement}  for a common,  forward invariant set  $\mathcal{C} \subseteq  \mathbb{R}^n$   of  PWS system \eqref{eq:generalPWSnew}. Then, if  there exist  $Q \in \mathbb{R}^{n \times n} \succ 0$ and $c >0$ such that:
\begin{align}
& \mu_{Q}(  \frac{\partial f_{i}}{\partial x} (x)) \le -c, ~~\forall x \in \bar{\mathcal{S}}_i \cap \mathcal{C}, ~  i =1,\ldots, N \label{eq:contractivePWScondition1} \\
&  \mu_{Q}( (f_{i+1}  \hspace{-0.3mm}(x)   \hspace{-0.9mm} -   \hspace{-0.9mm} f_{i}(x))   \hspace{-0.5mm} \nabla   \hspace{-0.5mm} H(x))  \hspace{-0.9mm} \le   \hspace{-0.7mm}0,   \hspace{-0.5mm}  \forall   \hspace{-0.3mm} x   \hspace{-0.3mm} \in   \hspace{-0.7mm} \Sigma_{i,i+1}   \hspace{-0.5mm} \cap   \hspace{-0.5mm} \mathcal{C},  i  \hspace{-0.7mm}  =  \hspace{-0.7mm} 1 \ldots N \hspace{-1mm} -  \hspace{-0.9mm} 1  \label{eq:contractivePWScondition2}
\end{align}
hold,  for any two  initial states $x_a(0), x_b(0) \in \mathcal{C}$,  the respective Filippov solutions of   \eqref{eq:generalPWSnew} satisfy \eqref{eq:contractivecondition2}  for all $t \ge 0$.

\end{theorem}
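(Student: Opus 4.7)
The plan is to chain Theorem~\ref{theorem:regulcontractive} with Lemma~\ref{lemma:regudifferenceglobal}: the Filippov solution $x(t)$ of \eqref{eq:generalPWSnew} arises as the uniform limit on compact time intervals of the Caratheodory solutions $x_\epsilon(t)$ of the regularized PWSC system \eqref{eq:regularizeddynamics} as $\epsilon\to 0^+$, so showing contractivity of the latter with a rate arbitrarily close to $c$ for every sufficiently small $\epsilon$ and passing to the limit recovers the desired bound.

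The principal task is to deduce the hypotheses \eqref{eq:contractivefinalcondition0}--\eqref{eq:contractivefinalcondition2} of Theorem~\ref{theorem:regulcontractive} from \eqref{eq:contractivePWScondition1}--\eqref{eq:contractivePWScondition2}. I would first fix any $c' \in (0, c)$. Continuous extendability of each Jacobian $\partial f_i/\partial x$ beyond $\bar{\mathcal{S}}_i$, together with continuity of $\mu_Q(\cdot)$ and uniform continuity on the compact relevant subsets of $\mathcal{C}$, lifts \eqref{eq:contractivePWScondition1} to the enlarged domains of \eqref{eq:contractivefinalcondition0}--\eqref{eq:contractivefinalcondition1NEW} with constant $c'$ for all $\epsilon$ below some $\epsilon_0$. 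An analogous continuity step propagates \eqref{eq:contractivePWScondition2} from $\Sigma_{i,i+1}\cap\mathcal{C}$ to the slab $\bar{\mathcal{S}}^{i,i+1}_\epsilon$ required by \eqref{eq:contractivefinalcondition2}. Theorem~\ref{theorem:regulcontractive} then yields, for $x_{\epsilon,a}(0)=x_a(0)$ and $x_{\epsilon,b}(0)=x_b(0)$,
\begin{align*}
\|x_{\epsilon,a}(t)-x_{\epsilon,b}(t)\|_2 \le \alpha\, e^{-c' t}\,\|x_a(0)-x_b(0)\|_2,\quad t\ge 0.
\end{align*}
Invoking Lemma~\ref{lemma:regudifferenceglobal} to let $\epsilon\to 0^+$ transfers this bound to the Filippov solutions, and finally letting $c'\uparrow c$ closes the rate and produces \eqref{eq:contractivecondition2}.

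The main obstacle will be the propagation of \eqref{eq:contractivePWScondition2} from the codimension-one manifold $\Sigma_{i,i+1}$ to the slab $\bar{\mathcal{S}}^{i,i+1}_\epsilon$. Ordinary continuity only delivers $\mu_Q((f_{i+1}(x)-f_i(x))\nabla H(x))\le O(\epsilon)$ off $\Sigma_{i,i+1}$, yet this matrix enters $\partial F_\epsilon/\partial x$ through the coefficient $\phi'(H/\epsilon)/(2\epsilon)=O(1/\epsilon)$, leaving an $O(1)$ perturbation of the pointwise matrix measure that must be absorbed. A sharper local expansion of the rank-one matrix $(f_{i+1}-f_i)\nabla H$ in the normal direction to $\Sigma_{i,i+1}$, or an interpretation of \eqref{eq:contractivePWScondition2} as a stable condition that extends with uniform margin off the manifold, is needed to complete this step; the remaining arguments then reduce to direct applications of the two earlier results.
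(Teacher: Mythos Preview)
Your approach is exactly the one the paper uses: invoke Theorem~\ref{theorem:regulcontractive} for the regularized system and then transfer the bound to the Filippov solutions via Lemma~\ref{lemma:regudifferenceglobal}. The paper's proof is in fact much terser than yours: it simply observes that $\lim_{\epsilon\to 0^+}\bigl(\bar{\mathcal{S}}^{i-1,i}_\epsilon\cup\mathcal{S}_i\cup\bar{\mathcal{S}}^{i,i+1}_\epsilon\bigr)=\bar{\mathcal{S}}_i$ and $\lim_{\epsilon\to 0^+}\bar{\mathcal{S}}^{i,i+1}_\epsilon=\Sigma_{i,i+1}$, declares that the hypotheses of Theorem~\ref{theorem:regulcontractive} therefore reduce to \eqref{eq:contractivePWScondition1}--\eqref{eq:contractivePWScondition2} in the limit, and concludes.

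The obstacle you flag in your last paragraph---that \eqref{eq:contractivePWScondition2} is stated only on $\Sigma_{i,i+1}$, while \eqref{eq:contractivefinalcondition2} is needed on the full slab $\bar{\mathcal{S}}^{i,i+1}_\epsilon$, and that the $\phi'(H/\epsilon)/(2\epsilon)$ prefactor amplifies an $O(\epsilon)$ continuity error back to $O(1)$---is a genuine technical point that the paper's proof does \emph{not} address. The paper takes the set-limit formally and does not quantify the matrix-measure perturbation for finite $\epsilon$. So you are not missing a trick that the paper supplies; rather, you have spotted a gap that the paper leaves open. Your instinct to seek a sharper normal-direction expansion, or to interpret \eqref{eq:contractivePWScondition2} as a condition with a built-in margin (e.g., noting that the rank-one matrix $(f_{i+1}-f_i)\nabla H$ has $\mu_Q$ equal to the positive part of a scalar, so the condition is really $\nabla H(f_{i+1}-f_i)\le 0$, which extends to the slab with only an $O(\epsilon)$ error in the \emph{scalar} and hence an $O(1)$---not $O(1/\epsilon)$---error after the prefactor, absorbable into $c'$), is the right direction if one wants a fully rigorous version.
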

\begin{proof}
Note that for  $\epsilon \to 0^+$, there exist $\lim_{\epsilon \to 0^+} \bar{\mathcal{S}}^{i-1,i}_\epsilon \cup \mathcal{S}_i   \cup\bar{\mathcal{S}}^{i,i+1}_\epsilon = \bar{\mathcal{S}}_i$ and $\lim_{\epsilon \to 0^+}\bar{\mathcal{S}}^{i,i+1}_\epsilon = \Sigma_{i,i+1}$. The Caratheodory solution $x_{\epsilon}(t)$ of \eqref{eq:regularizeddynamics}  thus must be contractive   with  rate $c$ in  $\mathcal{C}$ under  \eqref{eq:contractivePWScondition1} and  \eqref{eq:contractivePWScondition2}  according to  Theorem~\ref{theorem:regulcontractive}.
As $\lim_{\epsilon \to 0^+}||x(t) -x_{\epsilon}(t)|| = 0$ also  holds  for the   Filippov solution $x(t)$ of   \eqref{eq:generalPWSnew} according to Lemma~\ref{lemma:regudifferenceglobal}, $x(t)$ thus must also be contracting with the same rate as $x_{\epsilon}(t)$  in  $\mathcal{C}$.
 \hfill$\Box$ 
\end{proof}
\begin{remark}\label{remark:comparetoliterature}
Compared to the contraction condition for PWS systems with a single switching manifold in \cite{fiore2016contraction}, the new conditions presented here essentially replicate the former condition for each switching manifold. However, if the switching manifolds intersect, as demonstrated in the following section, the regularized dynamics and the corresponding Jacobian exhibit a different structure than before, leading to distinct contraction conditions.
\end{remark}




\section{PWS Systems with Intersecting Switching Manifolds in $\mathbb{R}^2$ }\label{sec:PWSPSM}

This section discusses how to extend the previous result to intersecting switching manifolds. For simplicity of the discussion, the PWS system \eqref{eq:generalPWSnew}  is assumed to be defined for the $ \mathbb{R}^2$
with two intersecting switching manifolds:
\begin{align}\label{eq:SigmaPMintersecting}
\Sigma_{1} \hspace{-0.5mm} := \hspace{-0.5mm} \{ \hspace{-0.5mm} x  \in \mathbb{R}^2 | H_{1}(x) \hspace{-0.5mm} = \hspace{-0.5mm} 0 \},  \Sigma_{2} \hspace{-0.5mm} := \hspace{-0.5mm} \{ \hspace{-0.5mm} x \in \mathbb{R}^2  | H_{2}(x) \hspace{-0.5mm} = \hspace{-0.5mm} 0 \} 
\end{align}
which partition  $ \mathbb{R}^2$ into four regions (see Fig.~\ref{fig:partitionintersection}):
\begin{align*}
&\dot{x}=f_1(x), x \in  \mathcal{S}_1:=\{x \in \mathbb{R}^2  ~| ~H_{1}(x)>0,  H_{2}(x)<0\}\\
&\dot{x}=f_2(x), x \in \mathcal{S}_2:=\{x \in \mathbb{R}^2  ~| ~H_{1}(x)>0,  H_{2}(x)>0\}\\
&\dot{x}=f_3(x), x \in \mathcal{S}_3:=\{x \in \mathbb{R}^2 ~| ~H_{1}(x)<0,  H_{2}(x)>0\}\\
&\dot{x}=f_4(x), x \in \mathcal{S}_4:=\{x \in \mathbb{R}^2  ~| ~H_{1}(x)<0,  H_{2}(x)<0\}.
\end{align*}
It is assumed that $f_1(x),f_2(x),f_3(x), f_4(x)$ are also defined on $\Sigma_{1}$ and $\Sigma_{2}$. (Note that extensions of the considered settings to 
higher state dimensions and more  switching manifolds are provided at the end of this section.)

\begin{assumption}\label{assum:interior}
The two  switching manifolds $\Sigma_{1}$ and $\Sigma_{2}$ intersect at the unique point $\tilde{x} \in \mathbb{R}^2$, while any vector within the convex hull formed by  $f_1(\tilde{x})$, $f_2(\tilde{x})$, $f_3(\tilde{x})$, and $f_4(\tilde{x})$ point commonly into one of the sets  $\mathcal{S}_i$.
\end{assumption}
This assumption ensures that $\tilde{x}$ belongs uniquely to the crossing region through which $\mathcal{S}_i$ is reached. Otherwise, as shown in \cite{kaklamanos2019regularization, jeffrey2014dynamics}, the Filippov convention at the intersection of switching manifolds does not yield a unique sliding vector.
\begin{definition}
The $\phi$-regularization of \eqref{eq:generalPWSnew} with the switching manifolds in \eqref{eq:SigmaPMintersecting} by using the transition function $\phi$
in \eqref{eq:transitonfunction} is given by:
\begin{align}\label{eq:regularizeddynamicsintersection}
\dot{x}=F_\epsilon (x)&=\frac{1}{4}\bigg((1 + \phi(\frac{H_{1}(x)}{\epsilon}))(1 - \phi(\frac{H_{2}(x)}{\epsilon}))f_1(x) \notag\\
&+(1 + \phi(\frac{H_{1}(x)}{\epsilon}))(1 + \phi(\frac{H_{2}(x)}{\epsilon}))f_2(x) \notag\\
&+(1 - \phi(\frac{H_{1}(x)}{\epsilon}))(1 + \phi(\frac{H_{2}(x)}{\epsilon}))f_3(x) \notag\\
&+(1 - \phi(\frac{H_{1}(x)}{\epsilon}))(1 - \phi(\frac{H_{2}(x)}{\epsilon}))f_4(x) \bigg)
\end{align}
for $0<\epsilon \ll 1$.
In addition, a set $\Omega_r:= \{\mathcal{S}^{1}_\epsilon, \mathcal{S}^{2}_\epsilon\}$ of the \textit{regions of regularization} is defined by:
\begin{align}\label{eq:regularizationsetintersect}
\mathcal{S}^{i}_\epsilon:= \{x \in \mathbb{R}^2   ~| ~-\epsilon < H_{i}(x)<\epsilon \},~~i \in \{1, 2\}.
\end{align}
\end{definition}
One can notice from  Fig.~\ref{fig:partitionintersection} that the $\mathbb{R}^2 $  is partitioned into 9 regions by $ H_{i}(x)= \pm \epsilon$.
Similar to \eqref{eq:regularizeddynamics},  the relation $F_\epsilon (x) = f_i(x)$  holds for all $x \in \mathcal{S}_i \setminus \{\mathcal{S}^{1}_\epsilon \cup \mathcal{S}^{2}_\epsilon\}$, $i \in \{1,2,3,4\}$, and
the regularized dynamics  \eqref{eq:regularizeddynamicsintersection} is again PWSC with a unique Caratheodory solution  $x_{\epsilon}(t)$.

\begin{figure}[t!]
  \begin{center}
		\psfrag{x1}[rc][rc][1]{$x_{[1]}$}
		\psfrag{x2}[rc][rc][1]{$x_{[2]}$}
		\psfrag{h1}[rc][rc][1]{$H_{1}(x)=0$}
		\psfrag{h2}[rc][rc][1]{$H_{2}(x)=0$}		
		
			\psfrag{e}[rc][rc][1]{$\epsilon$}
					\psfrag{x}[rc][rc][1]{ $\tilde{x}$}
				\psfrag{e1}[rc][rc][1]{$\mathcal{S}^{1}_\epsilon$}		
						\psfrag{e2}[rc][rc][1]{$\mathcal{S}^{2}_\epsilon$}			

    \psfrag{s1}[rc][rc][1]{$\dot{x}=f_1(x)$}
        \psfrag{s2}[rc][rc][1]{$\dot{x}=f_2(x)$}		
            \psfrag{s3}[rc][rc][1]{$\dot{x}=f_3(x)$}				
	            \psfrag{s4}[rc][rc][1]{$\dot{x}=f_4(x)$}

			\psfrag{f2}[rc][rc][1]{$\dot{x}=f_2(x)$}		
		\psfrag{f3}[rc][rc][1]{$\dot{x}=f_3(x)$}				
			\psfrag{f4}[rc][rc][1]{$\dot{x}=f_4(x)$}		
				
					\psfrag{t1}[rc][rc][0.8]{$\dot{x}=f_1(x)$}				
			\psfrag{t2}[rc][rc][0.8]{$\dot{x}=f_2(x)$}		
		\psfrag{t3}[rc][rc][0.8]{$\dot{x}=f_3(x)$}					
			\psfrag{t4}[rc][rc][0.8]{$\dot{x}=f_4(x)$}					
 \includegraphics[width = 0.45\textwidth]{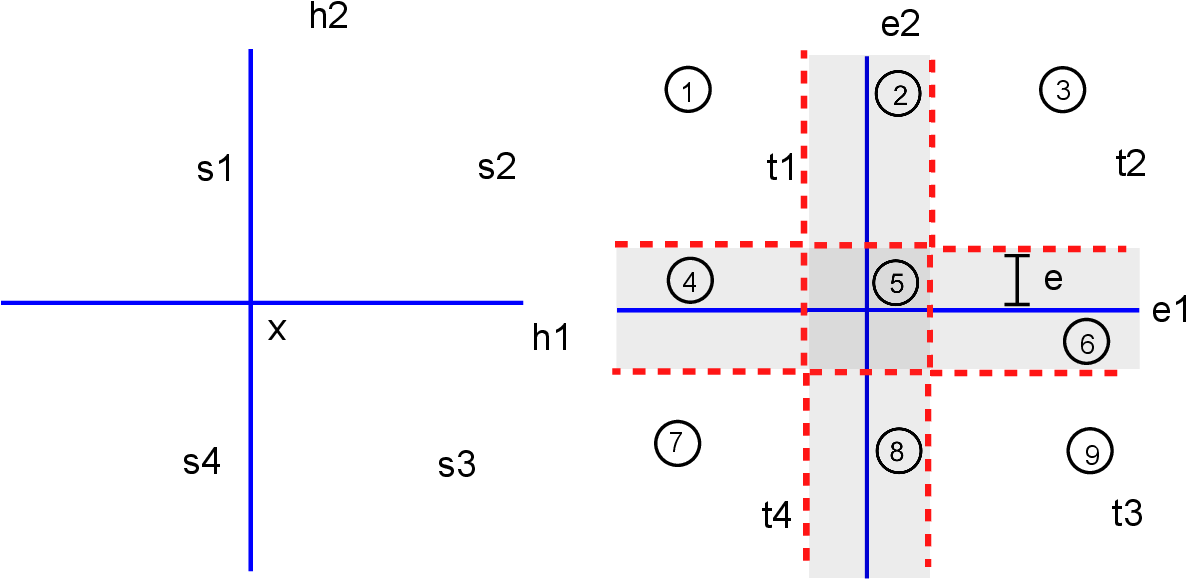}
    \caption{Both switching manifolds $\Sigma_{1}$ and $\Sigma_{2}$ are  expanded to the regions $\mathcal{S}^{1}_\epsilon$ and $\mathcal{S}^{2}_\epsilon$ according to \eqref{eq:regularizationsetintersect}, resulting in a partition of the space into 9 regions.}
    \label{fig:partitionintersection}
  \end{center}
\end{figure}

\begin{lemma} \label{lemma:regudifferenceint}
Let Assumption~\ref{assum:interior} hold and the two switching manifolds in   \eqref{eq:SigmaPMintersecting}   satisfy  Assumption~\ref{assum:lierequirement} for   a set $\mathcal{C} \subseteq  \mathbb{R}^2$. For 
any initial state $x(0) \in  \mathcal{S}^{1}_\epsilon \cup \mathcal{S}^{2}_\epsilon$ and any time  $t_f \ge  0$,  the Filippov solution  of  \eqref{eq:generalPWSnew} satisfies  $x(t) \in  \mathcal{S}^{1}_\epsilon \cup \mathcal{S}^{2}_\epsilon$ for all $t \in [0, t_f]$, and the difference of $x(t)$ and  the  Caratheodory solution $x_{\epsilon}(t)$ of   \eqref{eq:regularizeddynamicsintersection} satisfies: 
\begin{align}\label{eq:distancereductionint}
\lim_{\epsilon \to 0^+}||x(t) -x_{\epsilon}(t)|| = 0
\end{align}
for all $t \in [0, t_f]$.
\end{lemma}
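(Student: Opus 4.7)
The plan is to mimic the singular-perturbation argument in the proof of Lemma~\ref{lemma:regudifference}, after decomposing the regularization zone $\mathcal{S}^{1}_\epsilon \cup \mathcal{S}^{2}_\epsilon$ into three disjoint pieces: the two single-manifold strips $\mathcal{S}^{1}_\epsilon \setminus \mathcal{S}^{2}_\epsilon$ and $\mathcal{S}^{2}_\epsilon \setminus \mathcal{S}^{1}_\epsilon$, and the intersection patch $\mathcal{S}^{1}_\epsilon \cap \mathcal{S}^{2}_\epsilon$ containing $\tilde{x}$. Following the convention of Lemma~\ref{lemma:regudifference}, I would first treat the reference case $H_1(x)=x_{[1]}-d_1$ and $H_2(x)=x_{[2]}-d_2$ so that $\tilde{x}=(d_1,d_2)^T$, introduce the stretched fast coordinates $\hat{x}_{[k]}:=(x_{[k]}-d_k)/\epsilon$ for $k=1,2$, and defer the extension to general smooth $H_1$, $H_2$ to the implicit-function-theorem argument cited in \cite{kaklamanos2019regularization}. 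The invariance claim for $x(t)$ then reduces to the observation that sliding portions of the Filippov solution on $\Sigma_1$ or $\Sigma_2$ lie automatically in the corresponding $\mathcal{S}^{i}_\epsilon$, while crossings through $\tilde{x}$ are instantaneous and hence consistent with the claim in the limit $\epsilon \to 0^+$.

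On each single-manifold strip the problem collapses to the bimodal setting already handled in Lemma~\ref{lemma:regudifference}. Indeed, inside $\mathcal{S}^{1}_\epsilon \setminus \mathcal{S}^{2}_\epsilon$ one has $\phi(H_2(x)/\epsilon)\in\{-1,+1\}$, which kills two of the four terms in \eqref{eq:regularizeddynamicsintersection} and reduces the regularized vector field to a convex combination of a single pair $(f_i,f_j)$ driven only by $\phi(\hat{x}_{[1]})$. The associated boundary-layer model then has its critical manifold given by the Filippov sliding field of \eqref{eq:convention} on $\Sigma_1$ away from $\tilde{x}$, so that the slow flow of $x_\epsilon(t)$ coincides with the Filippov sliding dynamics and the uniform-convergence argument of Lemma~\ref{lemma:regudifference} applies verbatim. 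The strip $\mathcal{S}^{2}_\epsilon \setminus \mathcal{S}^{1}_\epsilon$ is symmetric.

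The main obstacle is the intersection patch, where both $\hat{x}_{[1]}$ and $\hat{x}_{[2]}$ become fast and the boundary-layer model is genuinely two-dimensional:
\begin{align*}
\epsilon\,\dot{\hat{x}}_{[k]} = \tfrac{1}{4}\sum_{j=1}^{4} w_j\bigl(\phi(\hat{x}_{[1]}),\phi(\hat{x}_{[2]})\bigr)\,f_{j,[k]}(x), \quad k=1,2,
\end{align*}
where the $w_j$ are the nonnegative products of $(1\pm\phi)$ factors in \eqref{eq:regularizeddynamicsintersection} and sum to $4$. At $\epsilon=0$ the right-hand side is a convex combination of $f_1(\tilde{x}),\ldots,f_4(\tilde{x})$, and Assumption~\ref{assum:interior} is exactly what is needed to exclude any fast equilibrium inside the patch: the entire convex hull points strictly into a single target region $\mathcal{S}_k$, so by continuity of the $f_j$ and for $\epsilon$ sufficiently small, $F_{\epsilon,[1]}$ and $F_{\epsilon,[2]}$ stay bounded away from zero with a fixed sign throughout a neighborhood of $\tilde{x}$. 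Consequently $(\hat{x}_{[1]},\hat{x}_{[2]})$ is driven monotonically out of $[-1,1]^2$ in time $O(\epsilon)$, and the regularized trajectory exits the patch on the side pointing into $\mathcal{S}_k$, which is also the region that the Filippov solution of \eqref{eq:generalPWSnew} enters when crossing $\tilde{x}$. Concatenating the three segments and applying a Gronwall-type estimate on each smooth piece then yields \eqref{eq:distancereductionint} uniformly on $[0,t_f]$. The technical step I expect to require the most care is turning the qualitative statement ``the convex hull points into $\mathcal{S}_k$'' into a quantitative escape-time bound that is uniform in the entry point on $\partial(\mathcal{S}^{1}_\epsilon \cap \mathcal{S}^{2}_\epsilon)$; this I would do by exhibiting a linear functional aligned with the cone direction into $\mathcal{S}_k$ that is strictly increasing along the fast flow.
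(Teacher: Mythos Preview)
Your proposal is correct and follows essentially the same route as the paper: decompose $\mathcal{S}^{1}_\epsilon \cup \mathcal{S}^{2}_\epsilon$ into the single-manifold strips (regions $\textcircled{2}$, $\textcircled{4}$, $\textcircled{6}$, $\textcircled{8}$) plus the intersection patch, reduce each strip to the bimodal setting of Lemma~\ref{lemma:regudifference} by observing that one of the $\phi$-factors saturates to $\pm 1$, and then handle the intersection via Assumption~\ref{assum:interior}. The paper's treatment of the intersection is terser than yours---it simply notes that $\lim_{\epsilon\to 0^+}\bar{\mathcal{S}}^{1}_\epsilon\cap\bar{\mathcal{S}}^{2}_\epsilon=\{\tilde{x}\}$ and that $\tilde{x}$ is excluded from the sliding region---whereas your fast-escape argument (no boundary-layer equilibrium, $O(\epsilon)$ transit time) makes the same point more quantitatively; both are valid and rest on the same use of Assumption~\ref{assum:interior}.
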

\begin{proof}
For any $x \in \mathcal{S}^{i}_\epsilon \setminus  \{ \bar{\mathcal{S}}^{1}_\epsilon \cap \bar{\mathcal{S}}^{2}_\epsilon \}$, $i \in \{1,2,3,4\}$,  assume, for instance,  that $x$ is located in the region $\textcircled{4}$ in Fig.~\ref{fig:partitionintersection}. It is known that $\phi(\frac{H_{2}(x)}{\epsilon}) =-1$ holds according to \eqref{eq:transitonfunction}, and thus casts  \eqref{eq:regularizeddynamicsintersection} into:
\begin{align}\label{eq:intersectionR4}
\dot{x} \hspace{-0.5mm} = \hspace{-0.5mm}\frac{1}{2}(1\hspace{-0.5mm} + \hspace{-0.5mm}\phi(\frac{H_{1}(x)}{\epsilon}))f_1(x)\hspace{-0.5mm} +\hspace{-0.5mm} \frac{1}{2}(1 \hspace{-0.5mm} - \hspace{-0.5mm} \phi(\frac{H_{1}(x)}{\epsilon}))f_4(x).
\end{align}
Then, one can use a similar approach as in the proof of  Lemma~\ref{lemma:regudifference}, to reformulate \eqref{eq:intersectionR4} into the corresponding   slow-fast system as in \eqref{eq:regularizeddynamics2} and \eqref{eq:regularizeddynamics3}: The change of the slow variables of   \eqref{eq:intersectionR4} on the critical manifold  of the respective boundary-layer model
coincides  with the sliding dynamics  \eqref{eq:convention} of   the Filippov solution  $x(t)$ of \eqref{eq:generalPWSnew} on the switching manifold $\Sigma_{1}$. This procedure is repeatedly applied to all regions  $\textcircled{2}$,  $\textcircled{4}$,  $\textcircled{6}$, and  $\textcircled{8}$ in 
Fig.~\ref{fig:partitionintersection}. Since
$\lim_{\epsilon \to 0^+}  \mathcal{S}^{i}_\epsilon \setminus  \{ \bar{\mathcal{S}}^{1}_\epsilon \cap \bar{\mathcal{S}}^{2}_\epsilon \} =  \Sigma_{i} \setminus \{ \tilde{x}\}$ holds,  the sliding dynamics  \eqref{eq:convention}  coincides with the motion of $x_{\epsilon}(t)$ of   \eqref{eq:regularizeddynamicsintersection} everywhere on 
$ \{ \Sigma_{1} \cup \Sigma_{2} \}\setminus \{ \tilde{x}\}$  at the limit $\epsilon \to 0^+$. Furthermore, since $\tilde{x}$  is excluded from the sliding region by Assumption~\ref{assum:interior}, the relation \eqref{eq:distancereductionint} between    $x(t)$  and $x_{\epsilon}(t)$  must hold uniformly on $\Sigma_{1} \cup \Sigma_{2}$.
 \hfill$\Box$ 
\end{proof}

\begin{lemma} \label{lemma:regudifferenceglobal2}
Let Assumption~\ref{assum:interior} hold and assume that the switching manifolds  in \eqref{eq:SigmaPMintersecting}    satisfy  Assumption~\ref{assum:lierequirement} for   a forward invariant  set $\mathcal{C} \subseteq  \mathbb{R}^2$ of   \eqref{eq:generalPWSnew}.  For any initial state $x(0) \in \mathcal{C}$, the Filippov solution $x(t)$ of \eqref{eq:generalPWSnew} and the Caratheodory solution $x_{\epsilon}(t)$ of   \eqref{eq:regularizeddynamicsintersection} then satisfy:
\begin{align}\label{eq:distancereductionglobalinter}
\lim_{\epsilon \to 0^+}||x(t) -x_{\epsilon}(t)|| = 0
    \end{align}
  uniformly for all $t \ge 0$.
\end{lemma}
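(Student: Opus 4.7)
The plan is to mirror the structure of the proof of Lemma~\ref{lemma:regudifferenceglobal}, promoting the \emph{local} convergence of Lemma~\ref{lemma:regudifferenceint} inside $\mathcal{S}^{1}_\epsilon \cup \mathcal{S}^{2}_\epsilon$ to a \emph{global} statement that holds for all $t \ge 0$ on the forward invariant set $\mathcal{C}$. First, I would partition the state space into the complement $\mathcal{C} \setminus \{\mathcal{S}^{1}_\epsilon \cup \mathcal{S}^{2}_\epsilon\}$ and the union of regularization strips $\mathcal{S}^{1}_\epsilon \cup \mathcal{S}^{2}_\epsilon$. On the complement, the definition of $F_\epsilon$ in \eqref{eq:regularizeddynamicsintersection} together with the saturation behavior of $\phi$ in \eqref{eq:transitonfunction} forces $F_\epsilon(x) = f_i(x)$ for $x \in \mathcal{S}_i \setminus\{\mathcal{S}^{1}_\epsilon \cup \mathcal{S}^{2}_\epsilon\}$, so the Filippov solution $x(t)$ and the Caratheodory solution $x_\epsilon(t)$ coincide exactly there, and the claim \eqref{eq:distancereductionglobalinter} is trivial on such intervals.

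Next, I would decompose the time axis $[0,\infty)$ into successive intervals on which the trajectory is either entirely outside the regularization strips or entirely inside $\mathcal{S}^{1}_\epsilon \cup \mathcal{S}^{2}_\epsilon$. On each inside-interval $[t_1, t_2]$, the state enters $\mathcal{S}^{1}_\epsilon \cup \mathcal{S}^{2}_\epsilon$ with matching initial condition (since the two solutions were identical just before), so Lemma~\ref{lemma:regudifferenceint} applies with initial time $t_1$, yielding $\lim_{\epsilon\to 0^+}\|x(t)-x_\epsilon(t)\|=0$ uniformly on $[t_1, t_2]$. The key observation is that in the limit $\epsilon\to 0^+$ the two solutions re-merge on leaving the strip (since the strip collapses to the switching manifold and the sliding dynamics match), so the matching initial condition for the \emph{next} outside-interval is preserved. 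Forward invariance of $\mathcal{C}$ guarantees the trajectory never leaves the region where Assumptions~\ref{assum:lierequirement} and \ref{assum:interior} hold, so the argument can be iterated for all $t \ge 0$.

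The main obstacle I anticipate is the uniformity claim in the presence of the intersection point $\tilde{x}$ and possibly infinitely many crossings of the strips. Assumption~\ref{assum:interior} is crucial here: because $\tilde{x}$ belongs to a crossing region, the trajectory passes through it transversally rather than sliding, so the point contributes only an instantaneous event and does not spoil the uniform convergence established on $\{\Sigma_1 \cup \Sigma_2\}\setminus\{\tilde{x}\}$ in Lemma~\ref{lemma:regudifferenceint}. For the possibility of accumulating crossings (chattering), one may invoke the forward invariance of $\mathcal{C}$ together with the regularity granted by Assumption~\ref{assum:lierequirement} to argue that on any finite horizon the number of transitions between inside- and outside-intervals is finite, and the convergence rate in $\epsilon$ on each such piece is controlled uniformly by the smoothness of $f_1,\dots,f_4$ on the compact closure of the relevant regions.

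Finally, concatenating the exact equality on outside-intervals with the uniform $\epsilon$-convergence on inside-intervals yields $\lim_{\epsilon\to 0^+}\|x(t)-x_\epsilon(t)\|=0$ uniformly for all $t \ge 0$, which establishes \eqref{eq:distancereductionglobalinter} and completes the proof in direct analogy with Lemma~\ref{lemma:regudifferenceglobal}.
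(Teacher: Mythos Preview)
Your proposal is correct and follows essentially the same approach as the paper, which in fact gives no detailed proof but only remarks that the argument ``follows a similar path as that of Lemma~\ref{lemma:regudifferenceglobal}.'' Your decomposition into outside-intervals (where $F_\epsilon=f_i$ and the two solutions coincide) and inside-intervals (where Lemma~\ref{lemma:regudifferenceint} applies), together with forward invariance of $\mathcal{C}$, is precisely that path, and you are more careful than the paper in explicitly addressing the role of Assumption~\ref{assum:interior} at $\tilde{x}$ and the possibility of accumulating crossings.
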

The proof of this result follows a similar path as that of Lemma~\ref{lemma:regudifferenceglobal}.

Based on \eqref{eq:distancereductionglobalinter},  one can again investigate the contraction property of the Filippov solution  of \eqref{eq:generalPWSnew} by analyzing the solution of  \eqref{eq:regularizeddynamicsintersection}. The 
Jacobian  of  \eqref{eq:regularizeddynamicsintersection}: 
\begin{align}\label{eq:Jacobiancompute4region}
&\frac{\partial F_{\epsilon}}{\partial x} \hspace{-0.5mm}(x) \hspace{-0.8mm} =\hspace{-0.9mm} \frac{1}{4\epsilon} \hspace{-0.5mm} \bigg(\hspace{-0.7mm}\phi'(\hspace{-0.7mm}\frac{H_{1}\hspace{-0.5mm}(x)}{\epsilon}\hspace{-0.5mm}) (f_1\hspace{-0.3mm}(x)\hspace{-0.7mm}+\hspace{-0.9mm}f_2\hspace{-0.3mm}(x) \hspace{-0.7mm} - \hspace{-0.8mm}f_3\hspace{-0.3mm}(x) \hspace{-0.8mm}- \hspace{-0.8mm}f_4\hspace{-0.3mm}(x)\hspace{-0.3mm})\nabla H_1\hspace{-0.5mm}(x) \notag \\
& +\phi'(\frac{H_{2}(x)}{\epsilon}) (f_2(x)+f_3(x)-f_1(x)-f_4(x))\nabla H_2(x)\notag  \\
&+ \hspace{-0.7mm} \phi' \hspace{-0.3mm} (\frac{H_{1}\hspace{-0.3mm}(x)}{\epsilon}) \phi(\frac{H_{2}\hspace{-0.3mm}(x)}{\epsilon})(f_2\hspace{-0.3mm}(x) \hspace{-0.7mm}+ \hspace{-0.7mm}f_4\hspace{-0.3mm}(x) \hspace{-0.7mm}- \hspace{-0.7mm}f_3\hspace{-0.3mm}(x) \hspace{-0.7mm}- \hspace{-0.7mm}f_1\hspace{-0.3mm}(x))\nabla \hspace{-0.3mm} H_1\hspace{-0.3mm}(x) \notag \\
&+ \hspace{-0.7mm} \phi' \hspace{-0.3mm} (\frac{H_{2}\hspace{-0.3mm}(x)}{\epsilon} \hspace{-0.3mm}) \phi(\frac{H_{1}\hspace{-0.3mm}(x)}{\epsilon}\hspace{-0.3mm})(f_2\hspace{-0.3mm}(x) \hspace{-0.7mm}+ \hspace{-0.7mm}f_4\hspace{-0.3mm}(x) \hspace{-0.7mm}- \hspace{-0.7mm}f_3\hspace{-0.3mm}(x) \hspace{-0.7mm}- \hspace{-0.7mm}f_1\hspace{-0.3mm}(x)\hspace{-0.3mm})\nabla \hspace{-0.3mm} H_2\hspace{-0.3mm}(x)  \hspace{-0.9mm}\bigg) \notag \\
&+\frac{1}{4}\bigg((1 + \phi(\frac{H_{1}(x)}{\epsilon}))(1 - \phi(\frac{H_{2}(x)}{\epsilon}))\frac{\partial f_{1}}{\partial x} (x) \notag\\
&+(1 + \phi(\frac{H_{1}(x)}{\epsilon}))(1 + \phi(\frac{H_{2}(x)}{\epsilon}))\frac{\partial f_{2}}{\partial x} (x) \notag\\
&+(1 - \phi(\frac{H_{1}(x)}{\epsilon}))(1 + \phi(\frac{H_{2}(x)}{\epsilon}))\frac{\partial f_{3}}{\partial x} (x) \notag\\
&+(1 - \phi(\frac{H_{1}(x)}{\epsilon}))(1 - \phi(\frac{H_{2}(x)}{\epsilon}))\frac{\partial f_{4}}{\partial x} (x)\bigg),
\end{align}
can be  continuously extended to the closure of each of the 9 regions in Fig.~\ref{fig:partitionintersection}. In addition, the inequalities:
\begin{align}\label{eq:Jacobianmeasure4region}
&(1 + \phi(\frac{H_{1}(x)}{\epsilon}))(1 - \phi(\frac{H_{2}(x)}{\epsilon})) \ge 0,  \notag \\
&(1 + \phi(\frac{H_{1}(x)}{\epsilon}))(1 +\phi(\frac{H_{2}(x)}{\epsilon})) \ge 0, \notag \\
&(1 - \phi(\frac{H_{1}(x)}{\epsilon}))(1 + \phi(\frac{H_{2}(x)}{\epsilon})) \ge 0, \notag \\
&(1 - \phi(\frac{H_{1}(x)}{\epsilon}))(1 - \phi(\frac{H_{2}(x)}{\epsilon})) \ge 0
\end{align}
  hold for all $x \in  \mathbb{R}^2$ according to \eqref{eq:transitonfunction}. There also exist:
  \begin{align}\label{eq:regionanalyis}
 &\phi'(\frac{H_{1}(x)}{\epsilon}) =0,\forall x \notin \bar{\mathcal{S}}^{1}_\epsilon,~~  \phi'(\frac{H_{2}(x)}{\epsilon}) =0, \forall x \notin \bar{\mathcal{S}}^{2}_\epsilon .
  \end{align}

\begin{theorem}\label{theorem:regulcontractive4region}
Let $\mathcal{C} \subseteq  \mathbb{R}^2$ be a forward invariant set of  \eqref{eq:regularizeddynamicsintersection}.  If there exist  $Q \in \mathbb{R}^{2 \times 2} \succ 0$ and $c >0$ such that:
\begin{align}
& \mu_Q(  \frac{\partial f_{1}}{\partial x} (x))\hspace{-0.5mm} \le \hspace{-0.5mm}-c, \forall  x \in  \{H_{1}(x) \ge -\epsilon, H_{2}(x) \le \epsilon\}\label{eq:contractive4regioncondition1} \\
& \mu_Q(  \frac{\partial f_{2}}{\partial x} (x)) \hspace{-0.5mm}\le \hspace{-0.5mm}-c, \forall   x \in  \{H_{1}(x) \ge -\epsilon,H_{2}(x) \ge -\epsilon\}\label{eq:contractive4regioncondition2} \\
& \mu_Q(  \frac{\partial f_{3}}{\partial x} (x))\hspace{-0.5mm} \le \hspace{-0.5mm}-c, \forall   x \in  \{H_{1}(x) \le \epsilon,H_{2}(x) \ge -\epsilon\}\label{eq:contractive4regioncondition3} \\
& \mu_Q(  \frac{\partial f_{4}}{\partial x} (x)) \le -c, \forall   x \in  \{H_{1}(x) \le \epsilon, H_{2}(x) \le \epsilon\}\label{eq:contractive4regioncondition4} \\
&  \mu_Q( \hspace{-0.3mm}(f_{1}(x)\hspace{-0.8mm} +\hspace{-0.9mm}f_{2}(x)\hspace{-0.8mm}-\hspace{-0.9mm}f_{3}(x)\hspace{-0.8mm}-\hspace{-0.9mm}f_{4}(x) \hspace{-0.3mm})\nabla  \hspace{-0.3mm}H_1(x) \hspace{-0.3mm})\hspace{-0.8mm} \le\hspace{-0.8mm}0,  \hspace{-0.3mm}\forall   x\hspace{-0.5mm} \in \hspace{-0.5mm}\bar{\mathcal{S}}^{1}_\epsilon \label{eq:contractive4regioncondition5}\\
&  \mu_Q( \hspace{-0.3mm}(f_{2}(x)\hspace{-0.8mm} +\hspace{-0.9mm}f_{3}(x)\hspace{-0.8mm}-\hspace{-0.9mm}f_{1}(x)\hspace{-0.8mm}-\hspace{-0.9mm}f_{4}(x) \hspace{-0.3mm})\nabla  \hspace{-0.3mm}H_2(x) \hspace{-0.3mm})\hspace{-0.8mm} \le \hspace{-0.8mm}0,  \hspace{-0.3mm}\forall   x\hspace{-0.5mm} \in \hspace{-0.5mm}\bar{\mathcal{S}}^{2}_\epsilon \label{eq:contractive4regioncondition6}\\
&  \mu_Q( \hspace{-0.3mm}(f_{2}(x)\hspace{-0.8mm} +\hspace{-0.9mm}f_{4}(x)\hspace{-0.8mm}-\hspace{-0.9mm}f_{1}(x)\hspace{-0.8mm}-\hspace{-0.9mm}f_{3}(x) \hspace{-0.3mm})\nabla  \hspace{-0.3mm}H_1(x) \hspace{-0.3mm})\hspace{-0.8mm}  \le \hspace{-0.8mm}0,  \hspace{-0.3mm}\forall   x\hspace{-0.5mm} \in \bar{\textcircled{6}}  \label{eq:contractive4regioncondition7}\\
&  \mu_Q( \hspace{-0.3mm}(f_{1}(x)\hspace{-0.8mm} +\hspace{-0.9mm}f_{3}(x)\hspace{-0.8mm}-\hspace{-0.9mm}f_{2}(x)\hspace{-0.8mm}-\hspace{-0.9mm}f_{4}(x) \hspace{-0.3mm})\nabla  \hspace{-0.3mm}H_1(x) \hspace{-0.3mm})\hspace{-0.8mm}  \le \hspace{-0.8mm}0,  \hspace{-0.3mm}\forall   x\hspace{-0.5mm} \in \bar{\textcircled{4}}  \label{eq:contractive4regioncondition72}\\
&  \mu_Q( \hspace{-0.3mm}(f_{2}(x)\hspace{-0.8mm} +\hspace{-0.9mm}f_{4}(x)\hspace{-0.8mm}-\hspace{-0.9mm}f_{1}(x)\hspace{-0.8mm}-\hspace{-0.9mm}f_{3}(x) \hspace{-0.3mm})\nabla  \hspace{-0.3mm}H_2(x) \hspace{-0.3mm})\hspace{-0.8mm} \le \hspace{-0.8mm}0,  \hspace{-0.3mm}\forall   x\hspace{-0.5mm} \in \bar{\textcircled{2}}  \label{eq:contractive4regioncondition8}\\
&  \mu_Q( \hspace{-0.3mm}(f_{1}(x)\hspace{-0.8mm} +\hspace{-0.9mm}f_{3}(x)\hspace{-0.8mm}-\hspace{-0.9mm}f_{2}(x)\hspace{-0.8mm}-\hspace{-0.9mm}f_{4}(x) \hspace{-0.3mm})\nabla  \hspace{-0.3mm}H_2(x) \hspace{-0.3mm})\hspace{-0.8mm} \le \hspace{-0.8mm}0,  \hspace{-0.3mm}\forall   x\hspace{-0.5mm} \in \bar{\textcircled{8}}  \label{eq:contractive4regioncondition82}\\
&f_{2}(x)+f_{4}(x)=f_{1}(x)+f_{3}(x), \forall  x \in  \bar{\textcircled{5}} \label{eq:contractive4regioncondition9}
\end{align}
hold, 
then the regularized  PWSC dynamics \eqref{eq:regularizeddynamicsintersection} is contractive with rate $c$ in $\mathcal{C}$.
\end{theorem}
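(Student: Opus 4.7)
The plan is to apply Theorem~\ref{theorem:PWSCcontractive} to the regularized dynamics \eqref{eq:regularizeddynamicsintersection}, which amounts to showing $\mu_Q(\tfrac{\partial F_\epsilon}{\partial x}(x))\le -c$ separately in each of the nine open cells of Fig.~\ref{fig:partitionintersection} (and then invoking continuous extendability of the Jacobian to each closure). Reading \eqref{eq:Jacobiancompute4region}, the Jacobian naturally splits into a \emph{convex-combination part} involving the four $\tfrac{\partial f_i}{\partial x}(x)$ weighted by $\tfrac{1}{4}(1\pm\phi_1)(1\pm\phi_2)$, and a \emph{singular part} carrying the factors $\phi'(\tfrac{H_i}{\epsilon})$ and $\phi'(\tfrac{H_i}{\epsilon})\phi(\tfrac{H_j}{\epsilon})$ multiplying rank-one matrices of the form $(\cdot)\nabla H_i(x)$. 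Subadditivity and positive homogeneity of $\mu_Q$ allow me to treat the two parts separately cell by cell.

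For the convex-combination part, the four coefficients are non-negative by \eqref{eq:Jacobianmeasure4region} and a direct expansion shows they sum to $1$ on $\mathbb{R}^2$. Moreover, any strictly positive coefficient forces the corresponding sign pattern on $H_1, H_2$: e.g.\ $(1+\phi_1)(1-\phi_2)>0$ implies $H_1\ge -\epsilon$ and $H_2\le \epsilon$, which is exactly the validity domain prescribed in \eqref{eq:contractive4regioncondition1} for $\mu_Q(\tfrac{\partial f_1}{\partial x})\le -c$. An analogous match holds for the remaining three terms via \eqref{eq:contractive4regioncondition2}--\eqref{eq:contractive4regioncondition4}, so subadditivity and positive homogeneity deliver $\mu_Q(\text{convex part})\le -c$ in every cell.

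For the singular part I would tally cell by cell using \eqref{eq:regionanalyis}. In the four corner cells both $\phi'$ vanish and the singular part is identically zero. In each of the four edge cells $\bar{\textcircled{2}}$, $\bar{\textcircled{4}}$, $\bar{\textcircled{6}}$, $\bar{\textcircled{8}}$, exactly one $\phi'$ is non-zero and the companion $\phi(\tfrac{H_j}{\epsilon})$ has saturated to $\pm 1$; substituting that saturation value merges one of the plain singular terms with the matching cross term, producing a sum of rank-one matrices scaled by the non-negative factor $\phi'(\tfrac{H_i}{\epsilon})/(4\epsilon)$. For instance on $\bar{\textcircled{6}}$, where $\phi(\tfrac{H_2}{\epsilon})=1$ and $\phi'(\tfrac{H_2}{\epsilon})=0$, the surviving singular content is $\tfrac{\phi'(H_1/\epsilon)}{4\epsilon}\bigl[(f_1+f_2-f_3-f_4)+(f_2+f_4-f_3-f_1)\bigr]\nabla H_1$, whose matrix measure is bounded using \eqref{eq:contractive4regioncondition5} together with \eqref{eq:contractive4regioncondition7}. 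Symmetric computations dispatch $\bar{\textcircled{2}}$, $\bar{\textcircled{4}}$, and $\bar{\textcircled{8}}$ via \eqref{eq:contractive4regioncondition6}, \eqref{eq:contractive4regioncondition72}, \eqref{eq:contractive4regioncondition82}, and \eqref{eq:contractive4regioncondition8}.

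The main obstacle, and the only step that forces the unusual identity \eqref{eq:contractive4regioncondition9}, is the central cell $\bar{\textcircled{5}}$. Here $\phi'(\tfrac{H_i}{\epsilon})$ is non-negative for both $i$, but the two cross terms in \eqref{eq:Jacobiancompute4region} carry the signed factors $\phi(\tfrac{H_2}{\epsilon})$ and $\phi(\tfrac{H_1}{\epsilon})$ that neither saturate nor have a definite sign, so positive homogeneity cannot factor them out of $\mu_Q$ and subadditivity would lose control. The hypothesis $f_1+f_3=f_2+f_4$ on $\bar{\textcircled{5}}$ annihilates the common vector coefficient $(f_2+f_4-f_3-f_1)$ shared by both cross terms, removing them entirely; the two remaining plain singular terms are then controlled by \eqref{eq:contractive4regioncondition5} and \eqref{eq:contractive4regioncondition6} since $\bar{\textcircled{5}}\subset\bar{\mathcal{S}}^1_\epsilon\cap\bar{\mathcal{S}}^2_\epsilon$, and the convex-combination bound from the second paragraph applies. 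Collecting the cell-wise estimates yields $\mu_Q(\tfrac{\partial F_\epsilon}{\partial x}(x))\le -c$ uniformly on $\mathcal{C}$, and Theorem~\ref{theorem:PWSCcontractive} closes the argument.
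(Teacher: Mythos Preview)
Your proof is correct and follows essentially the same route as the paper's: split the Jacobian \eqref{eq:Jacobiancompute4region} via subadditivity into the convex-combination part (handled by \eqref{eq:contractive4regioncondition1}--\eqref{eq:contractive4regioncondition4}) and the singular terms, bound the latter cell by cell, and use the identity \eqref{eq:contractive4regioncondition9} to kill the sign-indefinite cross terms in the central cell. The only cosmetic difference is that on the edge cells you merge the surviving plain singular term with the saturated cross term before bounding, whereas the paper isolates the two cross contributions as $c_1,c_2$ from the outset and treats them separately.
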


\begin{proof}
Similar to the proof of  Theorem.~\ref{theorem:regulcontractive}, the goal here is again to show that $\mu_Q( \frac{\partial F_{\epsilon}}{\partial x} (x)) \le -c$ holds for the PWSC system \eqref{eq:regularizeddynamicsintersection}. Note that  $\phi'(\frac{H_{1}(x)}{\epsilon}) \ge 0$ and $\phi'(\frac{H_{2}(x)}{\epsilon}) \ge 0$ hold  within each of the 
 9 regions in Fig.~\ref{fig:partitionintersection}. According to the inequalities \eqref{eq:Jacobianmeasure4region} and the
  positive homogeneity and subadditivity of $\mu_{Q}$, the conditions \eqref{eq:contractive4regioncondition1} - \eqref{eq:contractive4regioncondition6} ensure that:
\begin{align*}
&\mu_Q( \frac{\partial F_{\epsilon}}{\partial x} (x)) \le -c +c_1 +c_2,~\forall x \in \mathbb{R}^{2}
\end{align*}
 where $c_1:=\mu_Q(\phi' \hspace{-0.3mm}(\frac{H_{1}\hspace{-0.3mm}(x)}{\epsilon}) \phi(\frac{H_{2}\hspace{-0.3mm}(x)}{\epsilon})(f_2\hspace{-0.3mm}(x) \hspace{-0.7mm}+ \hspace{-0.7mm}f_4\hspace{-0.3mm}(x) \hspace{-0.7mm}- \hspace{-0.7mm}f_3\hspace{-0.3mm}(x) \hspace{-0.7mm}- \hspace{-0.7mm}f_1\hspace{-0.3mm}(x))\nabla \hspace{-0.3mm} H_1\hspace{-0.3mm}(x))$ and
$c_2:=\mu_Q(\phi' \hspace{-0.3mm} (\frac{H_{2}\hspace{-0.3mm}(x)}{\epsilon}) \phi(\frac{H_{1}\hspace{-0.3mm}(x)}{\epsilon})(f_2\hspace{-0.3mm}(x) \hspace{-0.7mm}+ \hspace{-0.7mm}f_4\hspace{-0.3mm}(x) \hspace{-0.7mm}- \hspace{-0.7mm}f_3\hspace{-0.3mm}(x) \hspace{-0.7mm}- \hspace{-0.7mm}f_1\hspace{-0.3mm}(x))\nabla \hspace{-0.3mm} H_2\hspace{-0.3mm}(x))$.  As the signs of $\phi'(\frac{H_{1}\hspace{-0.3mm}(x)}{\epsilon}) \phi(\frac{H_{2}\hspace{-0.3mm}(x)}{\epsilon})$ and $\phi'(\frac{H_{2}\hspace{-0.3mm}(x)}{\epsilon}) \phi(\frac{H_{1}\hspace{-0.3mm}(x)}{\epsilon})$ are not always definite, the values of $c_1$ and $c_2$ are separately considered for each region:
\begin{itemize}
\item In the regions  $\textcircled{1}$,  $\textcircled{3}$,  $\textcircled{7}$,  $\textcircled{8}$, i.e., for  $x \notin \bar{\mathcal{S}}^{1}_\epsilon \cup \bar{\mathcal{S}}^{2}_\epsilon$,   $c_1=c_2=0$ hold due to  \eqref{eq:regionanalyis}.
\item In the region  $\textcircled{2}$,  $\phi'(\frac{H_{1}(x)}{\epsilon}) = 0$, $\phi(\frac{H_{1}(x)}{\epsilon}) = 1$, $\phi'(\frac{H_{2}(x)}{\epsilon}) = 1$, and $\phi(\frac{H_{2}(x)}{\epsilon}) \in (-1,1)$ are satisfied. Thus $c_1=0$ applies, while  $c_2\le 0$ holds under the condition \eqref{eq:contractive4regioncondition8}.
\item In the region  $\textcircled{8}$, $\phi'(\frac{H_{1}(x)}{\epsilon}) = 0$, $\phi(\frac{H_{1}(x)}{\epsilon}) = -1$, $\phi'(\frac{H_{2}(x)}{\epsilon}) = 1$, and $\phi(\frac{H_{2}(x)}{\epsilon}) \in (-1,1)$ apply. Thus, $c_1=0$ and $c_2\le 0$ hold under the condition \eqref{eq:contractive4regioncondition82}.
\item Similar to the two previous cases, in the region $\textcircled{6}$, $c_2=0$ applies, while  $c_1\le 0$ holds under the condition \eqref{eq:contractive4regioncondition7}. In the region  $\textcircled{4}$, $c_2=0$ applies, while  $c_1\le 0$ holds under the condition \eqref{eq:contractive4regioncondition72}.
\item In the region  $\textcircled{5}$,    $\phi'(\frac{H_{1}(x)}{\epsilon}) = 1$, $\phi(\frac{H_{1}(x)}{\epsilon}) \in (-1,1)$, $\phi'(\frac{H_{2}(x)}{\epsilon}) = 1$, and $\phi(\frac{H_{2}(x)}{\epsilon}) \in (-1,1)$, 
$c_1\le 0$ and $c_2\le 0$ hold under the condition \eqref{eq:contractive4regioncondition9}.

\end{itemize}
 As a result,  $\mu_Q(\frac{\partial F_{\epsilon}}{\partial x} (x)) \le -c$  are satisfied  within each of the 9 regions
and their closures, implying that the  PWSC dynamics  \eqref{eq:regularizeddynamicsintersection}  is  contractive with rate $c$ in $\mathcal{C}$ according to Theorem~\ref{theorem:PWSCcontractive}.  \hfill$\Box$ 
\end{proof}

Finally, based on the fact that $\lim_{\epsilon \to 0^+}\bar{\mathcal{S}}^{i}_\epsilon =\Sigma_i$ and $\lim_{\epsilon \to 0^+}||x(t) -x_{\epsilon}(t)|| = 0$, the  Filippov solution  of   \eqref{eq:generalPWSnew} with switching manifolds in   \eqref{eq:SigmaPMintersecting} is contracting according to the following result:
\begin{theorem}\label{theorem:originPWScontractiveint}

Let Assumption~\ref{assum:interior} hold and  let both switching manifolds in   \eqref{eq:SigmaPMintersecting} satisfy  Assumption~\ref{assum:lierequirement}  for a common,  forward invariant set  $\mathcal{C} \subseteq  \mathbb{R}^2$   of   \eqref{eq:generalPWSnew}. If  there exist  $Q \in \mathbb{R}^{2 \times 2} \succ 0$ and $c >0$ such that:
\begin{align}
& \mu_{Q}(  \frac{\partial f_{i}}{\partial x} (x)) \le -c, ~\forall x \in \bar{\mathcal{S}}_i  \cap \mathcal{C},~\forall i = 1, 2, 3, 4, \label{eq:contractivePWScondition14region1} \\
&   \mu_Q( \hspace{-0.3mm}(f_{1}(x)\hspace{-0.8mm} +\hspace{-0.9mm}f_{2}(x)\hspace{-0.8mm}-\hspace{-0.9mm}f_{3}(x)\hspace{-0.8mm}-\hspace{-0.9mm}f_{4}(x) \hspace{-0.3mm})\nabla  \hspace{-0.3mm}H_1(x) \hspace{-0.3mm})\hspace{-0.8mm} \le\hspace{-0.8mm}0, \forall x \hspace{-0.6mm} \in\hspace{-0.6mm}\Sigma_1 \hspace{-0.6mm}\cap \hspace{-0.6mm}\mathcal{C},\label{eq:contractivePWScondition14region2} \\
& \mu_Q( \hspace{-0.3mm}(f_{2}(x)\hspace{-0.8mm} +\hspace{-0.9mm}f_{3}(x)\hspace{-0.8mm}-\hspace{-0.9mm}f_{1}(x)\hspace{-0.8mm}-\hspace{-0.9mm}f_{4}(x) \hspace{-0.3mm})\nabla  \hspace{-0.3mm}H_2(x) \hspace{-0.3mm})\hspace{-0.8mm} \le \hspace{-0.8mm}0, \forall x \hspace{-0.6mm}\in \hspace{-0.6mm}\Sigma_2 \hspace{-0.6mm}\cap \hspace{-0.6mm}\mathcal{C}, \label{eq:contractivePWScondition14region3} \\
&  \mu_Q( (f_{2}(x)+f_{4}(x)-f_{1}(x)-f_{3}(x) )\nabla H_1(x) ) \le 0, \notag\\
& \qquad \qquad \qquad \qquad \qquad \forall   x\in  \Sigma_1 \cap \{H_2(x) >0\} \cap\mathcal{C}, \label{eq:contractivePWScondition14region31}\\
&  \mu_Q( (f_{1}(x)+f_{3}(x)-f_{2}(x)-f_{4}(x) )\nabla  H_1(x) ) \le 0,  \notag\\
& \qquad \qquad \qquad \qquad \qquad \forall   x\in  \Sigma_1 \cap \{H_2(x) <0\} \cap\mathcal{C}, \label{eq:contractivePWScondition14region32}\\
&    \mu_Q( (f_{2}(x)+f_{4}(x)-f_{1}(x)-f_{3}(x) )\nabla H_2(x) ) \le 0,  \notag\\
& \qquad \qquad \qquad \qquad \qquad \forall   x\in  \Sigma_2 \cap \{H_1(x) >0\} \cap\mathcal{C}, \label{eq:contractivePWScondition14region33}\\
&   \mu_Q( (f_{1}(x)+f_{3}(x)-f_{2}(x)-f_{4}(x) )\nabla  H_2(x) ) \le 0,  \notag\\
& \qquad \qquad \qquad \qquad \qquad \forall   x\in  \Sigma_2 \cap \{H_1(x) <0\} \cap\mathcal{C}, \label{eq:contractivePWScondition14region34}\\
& f_{2}(x)+f_{4}(x)=f_{1}(x)+f_{3}(x), ~ \textit{for} ~  x =\tilde{x} \label{eq:contractivePWScondition14region4}
\end{align}
hold, then the respective Filippov solutions of   \eqref{eq:generalPWSnew} satisfy \eqref{eq:contractivecondition2} for any pair of  initial states $x_a(0), x_b(0) \in \mathcal{C}$ and for all $t \ge 0$.
\end{theorem}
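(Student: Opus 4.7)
The plan is to mirror the argument of Theorem~\ref{theorem:originPWScontractive}: show that each hypothesis of Theorem~\ref{theorem:originPWScontractiveint} posed on the ``thin'' sets $\bar{\mathcal{S}}_i$, $\Sigma_1$, $\Sigma_2$, and $\{\tilde{x}\}$ implies the corresponding hypothesis of Theorem~\ref{theorem:regulcontractive4region} posed on the $\epsilon$-fattened cells in the limit $\epsilon\to 0^+$; then invoke Theorem~\ref{theorem:regulcontractive4region} to conclude that the Caratheodory solution $x_{\epsilon}(t)$ of \eqref{eq:regularizeddynamicsintersection} is contractive with rate $c$ in $\mathcal{C}$; and finally transfer this contraction to the Filippov solution $x(t)$ of \eqref{eq:generalPWSnew} via Lemma~\ref{lemma:regudifferenceglobal2}.

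The first step is set-convergence bookkeeping. I would note
\[
\lim_{\epsilon\to 0^+}\{H_1(x)\ge -\epsilon,\,H_2(x)\le \epsilon\}=\bar{\mathcal{S}}_1,\qquad \lim_{\epsilon\to 0^+}\bar{\mathcal{S}}^{i}_\epsilon=\Sigma_i,
\]
together with the analogous limits for $\bar{\mathcal{S}}_2,\bar{\mathcal{S}}_3,\bar{\mathcal{S}}_4$ and for the four ``arm'' cells $\bar{\textcircled{2}},\bar{\textcircled{4}},\bar{\textcircled{6}},\bar{\textcircled{8}}$, each of which shrinks onto the portion of $\Sigma_1$ or $\Sigma_2$ singled out by the sign of the other $H_j$. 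Since $\mu_Q(\cdot)$ is continuous and the $f_i$ together with their Jacobians extend continuously to the closures of the $\mathcal{S}_i$, the bounds \eqref{eq:contractivePWScondition14region1} pass to \eqref{eq:contractive4regioncondition1}--\eqref{eq:contractive4regioncondition4}, the bounds \eqref{eq:contractivePWScondition14region2}--\eqref{eq:contractivePWScondition14region3} pass to \eqref{eq:contractive4regioncondition5}--\eqref{eq:contractive4regioncondition6}, and \eqref{eq:contractivePWScondition14region31}--\eqref{eq:contractivePWScondition14region34} pass to \eqref{eq:contractive4regioncondition7}--\eqref{eq:contractive4regioncondition82} as $\epsilon\to 0^+$.

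The hard part will be the central cell $\bar{\textcircled{5}}=\bar{\mathcal{S}}^{1}_\epsilon\cap\bar{\mathcal{S}}^{2}_\epsilon$. Theorem~\ref{theorem:regulcontractive4region} demands the pointwise balance $f_2(x)+f_4(x)=f_1(x)+f_3(x)$ on a two-dimensional neighbourhood of $\tilde{x}$, whereas \eqref{eq:contractivePWScondition14region4} only postulates it at $\tilde{x}$ itself. The key observation is that, by Assumption~\ref{assum:interior}, $\tilde{x}$ is unique and lies in the crossing region, so $\lim_{\epsilon\to 0^+}\bar{\textcircled{5}}=\{\tilde{x}\}$ and the sojourn time of any Filippov trajectory in $\bar{\textcircled{5}}$ vanishes with $\epsilon$. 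Consequently, the region-wise equality in Theorem~\ref{theorem:regulcontractive4region} degenerates, in the limit, to \eqref{eq:contractivePWScondition14region4}, and any residual deviation of $\mu_Q(\partial F_\epsilon/\partial x)$ from $-c$ on $\bar{\textcircled{5}}$ is absorbed as $\epsilon\to 0^+$ since both the spatial measure of $\bar{\textcircled{5}}$ and the corresponding traversal time shrink to zero.

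Once all hypotheses of Theorem~\ref{theorem:regulcontractive4region} are secured in this limiting sense, that theorem gives $\|x_{\epsilon,a}(t)-x_{\epsilon,b}(t)\|_2\le \alpha e^{-ct}\|x_a(0)-x_b(0)\|_2$ for the Caratheodory flow in $\mathcal{C}$. Lemma~\ref{lemma:regudifferenceglobal2} then supplies $\lim_{\epsilon\to 0^+}\|x(t)-x_\epsilon(t)\|=0$ uniformly for $t\ge 0$, so passing to the limit yields the same estimate \eqref{eq:contractivecondition2} for the Filippov solution $x(t)$ of \eqref{eq:generalPWSnew}, completing the argument.
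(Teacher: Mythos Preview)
Your proposal is correct and follows essentially the same route as the paper: the paper's justification for Theorem~\ref{theorem:originPWScontractiveint} is the single sentence preceding its statement, which invokes $\lim_{\epsilon\to 0^+}\bar{\mathcal{S}}^{i}_\epsilon=\Sigma_i$ and $\lim_{\epsilon\to 0^+}\|x(t)-x_\epsilon(t)\|=0$ to reduce to Theorem~\ref{theorem:regulcontractive4region} and Lemma~\ref{lemma:regudifferenceglobal2}, in direct analogy with the proof of Theorem~\ref{theorem:originPWScontractive}. Your write-up is in fact more explicit than the paper, in particular your treatment of the central cell $\bar{\textcircled{5}}$ and the role of Assumption~\ref{assum:interior} in ensuring that $\tilde{x}$ is traversed rather than a sliding point.
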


Note that both Theorems \ref{theorem:originPWScontractive}  and \ref{theorem:originPWScontractiveint}  stem from the observation that the sliding dynamics (as determined by the Filippov convention  in \eqref{eq:convention} on each  switching manifold)  coincides with change of the slow variables in the  corresponding singular perturbed problem. However, if two or more switching manifolds intersect, the  sliding dynamics  becomes non-unique (see \cite{jeffrey2014dynamics}) and thus fails to follow the same change as the slow variables\footnote{For PWS systems defined in $\mathbb{R}^3$, the work in \cite{kaklamanos2019regularization} introduced a type of sliding dynamics on the  intersection of two switching manifolds with co-dimension 1, which   coincides with the motion of the slow variables  in the   singular perturbed problem.}. Theorem \ref{theorem:originPWScontractiveint}  addresses this problem by excluding  the unique intersection point $\tilde{x}$   from  the sliding region.  Following this approach, consider the more general case of  \eqref{eq:generalPWSnew} defined in 
$\mathbb{R}^n$ involving   $N$ disjoint open sets  $\mathcal{S}_i$, $i =1, \ldots, N$, and a set  $\Omega$ of possibly intersecting codimension-one switching manifolds. When a control input  $u \in  \mathbb{R}^m$ is present, i.e., $\dot{x}=f_i(x, u)$,  $x \in \mathcal{S}_i$, a contracting  state feedback control law $u=g_i(x)$ can be synthesized by enforcing: (1)  the corresponding conditions   in \eqref{eq:contractivePWScondition14region1} to \eqref{eq:contractivePWScondition14region4}, and (2) that the intersection  of any two or more  switching manifolds in    $\mathcal{C}$ of the controlled system belongs exclusively to the crossing  region.

\section{Numerical Example}
\subsection{Example 1}
Consider a PWS system of type \eqref{eq:generalPWSnew} with $N=3$ given by:
\begin{align*}
    f_1(x)&=\begin{bmatrix}-1&1\\0&-1\end{bmatrix}x+\begin{bmatrix}3\\0\end{bmatrix}, ~~ f_2(x)=\begin{bmatrix}-2&1\\0&-1\end{bmatrix}x+\begin{bmatrix}1\\0\end{bmatrix}\\
    &\qquad \qquad f_3(x) =\begin{bmatrix}-3&1\\0&-1\end{bmatrix}x +\begin{bmatrix}-2\\0\end{bmatrix}
\end{align*}
for which the state space $\mathbb{R}^2$ is partitioned into three regions $\mathcal{S}_1$, $\mathcal{S}_2$, and $\mathcal{S}_3$ by two parallel lines $H_{1,2}(x)=\begin{bmatrix}1&0\end{bmatrix}x$ and $H_{2,3}(x)=\begin{bmatrix}1&0\end{bmatrix}x-2$, see Fig. \ref{fig:numerical.example}. Note that Assumption~\ref{assum:lierequirement} is satisfied for  $\mathcal{C} = \mathbb{R}^2$.

For $Q=I_2$, one obtains:
\begin{align*}
    \mu_{Q,2}\left(\frac{\partial f_1}{\partial x}(x)\right)=
    \mu_2\left(\begin{bmatrix}-1&1\\0&-1\end{bmatrix}\right)=-0.50<0, \forall x \in \mathcal{S}_1\\
    \mu_{Q,2}\left(\frac{\partial f_2}{\partial x}(x)\right)=
    \mu_2\left(\begin{bmatrix}-2&1\\0&-1\end{bmatrix}\right)=-0.80<0,\forall x \in \mathcal{S}_2\\
    \mu_{Q,2}\left(\frac{\partial f_3}{\partial x}(x)\right)=
    \mu_2\left(\begin{bmatrix}-3&1\\0&-1\end{bmatrix}\right)=-0.88<0, \forall x \in \mathcal{S}_3
\end{align*}
i.e., the condition \eqref{eq:contractivePWScondition1} is fulfilled with $c=0.5$, and  \eqref{eq:contractivePWScondition2} is also satisfied:
\begin{align*}
    &\mu_{Q,2}((f_2(x)-f_1(x))\begin{bmatrix}1&0\end{bmatrix})=\mu_{2}\left(\begin{bmatrix}
        -x_1-2&0\\0&0\end{bmatrix}\right)\\
        &\qquad  \qquad =\max(-2,0)=0,\ \forall x\in \{H_{1,2}(x)= 0\},\\
    &\mu_{Q,2}((f_3(x)-f_2(x))\begin{bmatrix}1&0\end{bmatrix})=\mu_2\left(\begin{bmatrix}
        -x_1-3&0\\0&0\end{bmatrix}\right)\\
        &\qquad  \qquad  =\max(-3,0)=0,\ \forall x\in \{H_{2,3}(x)= 0\}.
\end{align*}
Therefore, according to Theorem \ref{theorem:originPWScontractive}, the Filippov solution of  \eqref{eq:generalPWSnew} is contracting in $\mathcal{C}$. Moreover, since the point $\hat{x}=\begin{bmatrix} 0.5& 0\end{bmatrix}^T$ is the only equilibrium point (and thus constitutes a solution) of   \eqref{eq:generalPWSnew}, and since there is no Filippov  equilibrium point on either switching manifold, $\hat{x}$ thus must be stable and attract all  Filippov solutions. This fact is confirmed in  Fig. \ref{fig:numerical.example} by showing that the crossing and sliding motions do not affect  the convergence towards the equilibrium $\hat{x}$.

\begin{figure}[t!]
  \begin{center}	
  \includegraphics[width = 0.45\textwidth]{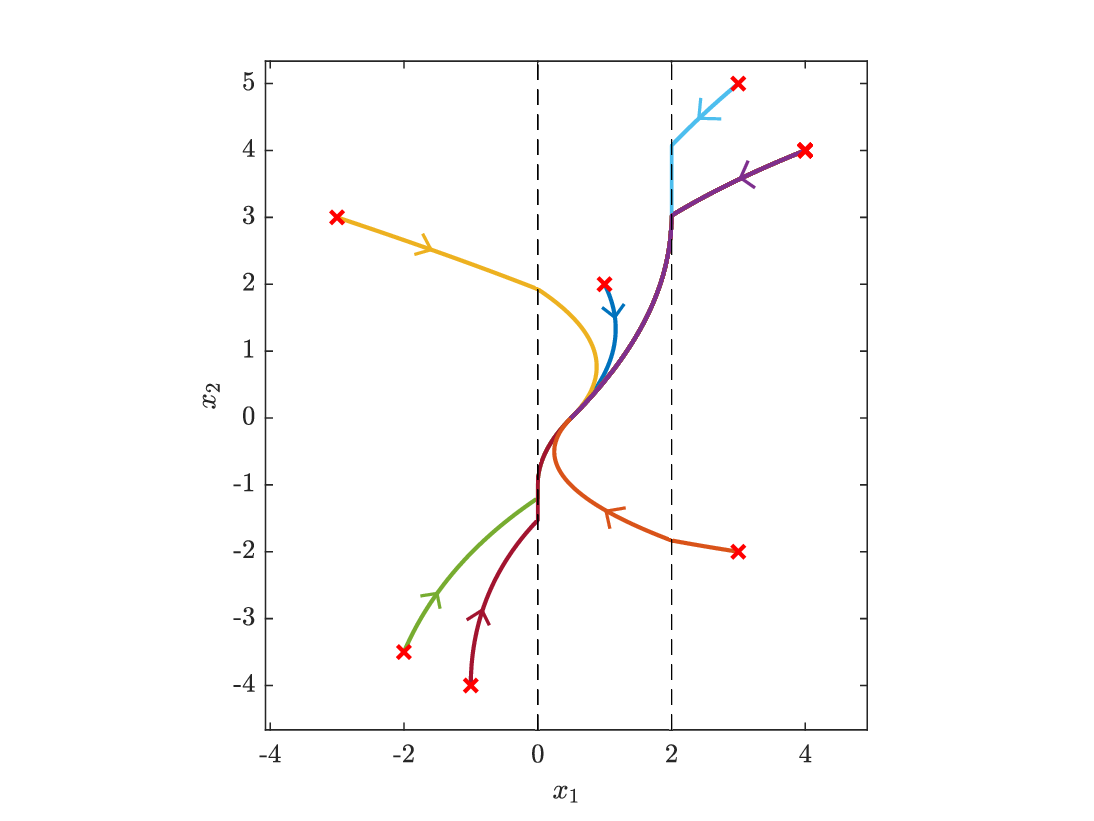}
   \caption{Evolution of the Filippov solution of Example 1 from different initial states with both sliding and crossing motions.}
    \label{fig:numerical.example}
      \end{center}
\end{figure}

\subsection{Example 2}
Consider now the PWS system of type \eqref{eq:generalPWSnew} with $N=4$ given by:
\begin{align*}
    f_1(x)&=\begin{bmatrix}-8&-2\\4&-4\end{bmatrix}x+\begin{bmatrix}6\\-1.5\end{bmatrix},\\ 
    f_2(x)&=\begin{bmatrix}-2&-4\\3&-4\end{bmatrix}x+\begin{bmatrix}4\\-1.5\end{bmatrix},\\
    f_3(x)&=\begin{bmatrix}-2&-2\\4&-10\end{bmatrix}x+\begin{bmatrix}4\\-1.3\end{bmatrix},\\
    f_4(x)&=\begin{bmatrix}-8&-4\\3&-10\end{bmatrix}x+\begin{bmatrix}6\\-1.3\end{bmatrix}.
\end{align*}
The state space $\mathbb{R}^2$ is partitioned into four domains $\mathcal{S}_1,\mathcal{S}_2,\mathcal{S}_3$ and $\mathcal{S}_4$ by two intersecting lines $H_1(x)=\begin{bmatrix}0&1\end{bmatrix}x$ and $H_2(x)=\begin{bmatrix}1&0\end{bmatrix}x$, see Fig.\eqref{fig:numerical.example2}.
With $Q=I_2$, it holds that:
\begin{align*}
    \mu_{Q}\left(\frac{\partial f_1}{\partial x}(x)\right)&=\mu_2\left(\begin{bmatrix}-8&-2\\4&-4
    \end{bmatrix}\right)\approx-3.76<0,\ &\forall x\in\mathcal{S}_1\\
     \mu_{Q}\left(\frac{\partial f_2}{\partial x}(x)\right)&=\mu_2\left(\begin{bmatrix}-2&-4\\3&-4
    \end{bmatrix}\right)\approx-1.88<0,\ &\forall x\in\mathcal{S}_2\\
    \mu_{Q}\left(\frac{\partial f_3}{\partial x}(x)\right)&=\mu_2\left(\begin{bmatrix}-2&-2\\4&-10
    \end{bmatrix}\right)\approx-1.87<0,\ &\forall x\in\mathcal{S}_3\\
    \mu_{Q}\left(\frac{\partial f_4}{\partial x}(x)\right)&=\mu_2\left(\begin{bmatrix}-8&-4\\3&-10
    \end{bmatrix}\right)\approx-7.88<0,\ &\forall x\in\mathcal{S}_4
\end{align*}
i.e.,  \eqref{eq:contractivePWScondition14region1} is satisfied with $c=1.87$. The condition \eqref{eq:contractivePWScondition14region4} is  satisfied in the  intersection point $\tilde{x}=\begin{bmatrix}0&0\end{bmatrix}^T$ since:
\begin{align*}
    f_2(x)+f_4(x)-f_1(x)-f_3(x)=\begin{bmatrix}0&0\end{bmatrix}^T,\ for\ x=\tilde{x}.
\end{align*}
Furthermore, the conditions \eqref{eq:contractivePWScondition14region2}-\eqref{eq:contractivePWScondition14region34} hold since:
\begin{align*}
 &   \mu_Q((f_1(x)+f_2(x)-f_3(x)-f_4(x))\nabla H_1(x))\\
  &  =\mu_2\left(\begin{bmatrix}0&0\\0&12x_2-4\end{bmatrix}\right)=0,\ \forall x\in\Sigma_1\\
  &  \mu_Q((f_2(x)+f_3(x)-f_1(x)-f_4(x))\nabla H_2(x))\\
  & = \mu_2\left(\begin{bmatrix}12x_1-4&0\\0&0\end{bmatrix}\right)=0,\ \forall x\in\Sigma_2\\
  &  \mu_Q((f_2(x)+f_4(x)-f_1(x)-f_3(x))\nabla H_1(x))\\
  & = \mu_2\left(\begin{bmatrix}0&-4x_2\\0&-2x_1 \end{bmatrix}\right)=0,\ \forall x\in\Sigma_1\cap\{H_2(x)>0\}\\
  &  \mu_Q((f_1(x)+f_3(x)-f_2(x)-f_4(x))\nabla H_1(x))\\
  & = \mu_2\left(\begin{bmatrix}0&4x_2\\0&2x_1\end{bmatrix}\right)=0,\ \forall x\in\Sigma_1\cap\{H_2(x)<0\}\\
  &  \mu_Q((f_2(x)+f_4(x)-f_1(x)-f_3(x))\nabla H_2(x))\\
  & = \mu_2\left(\begin{bmatrix}-4x_2&0\\-2x_1&0\end{bmatrix}\right)=0,\ \forall x\in\Sigma_2\cap\{H_2(x)>0\}\\
  &  \mu_Q((f_1(x)+f_3(x)-f_2(x)-f_4(x))\nabla H_2(x))\\
  & = \mu_2\left(\begin{bmatrix}4x_2&0\\2x_1&0\end{bmatrix}\right)=0,\ \forall x\in\Sigma_2\cap\{H_2(x)<0\}.
\end{align*}
Therefore, according to Theorem \ref{theorem:originPWScontractiveint} the Filippov solution of \eqref{eq:generalPWSnew} is contracting, and as $\hat{x}=\begin{bmatrix}1.1&0.45\end{bmatrix}^T$ is the only equilibrium point (including Filippov equilibrium points on the switching manifolds), it must again be stable and attract all other Filippov solutions, see Fig.\ref{fig:numerical.example2}.

\begin{figure}[t!]
  \begin{center}	
  \includegraphics[width = 0.45\textwidth]{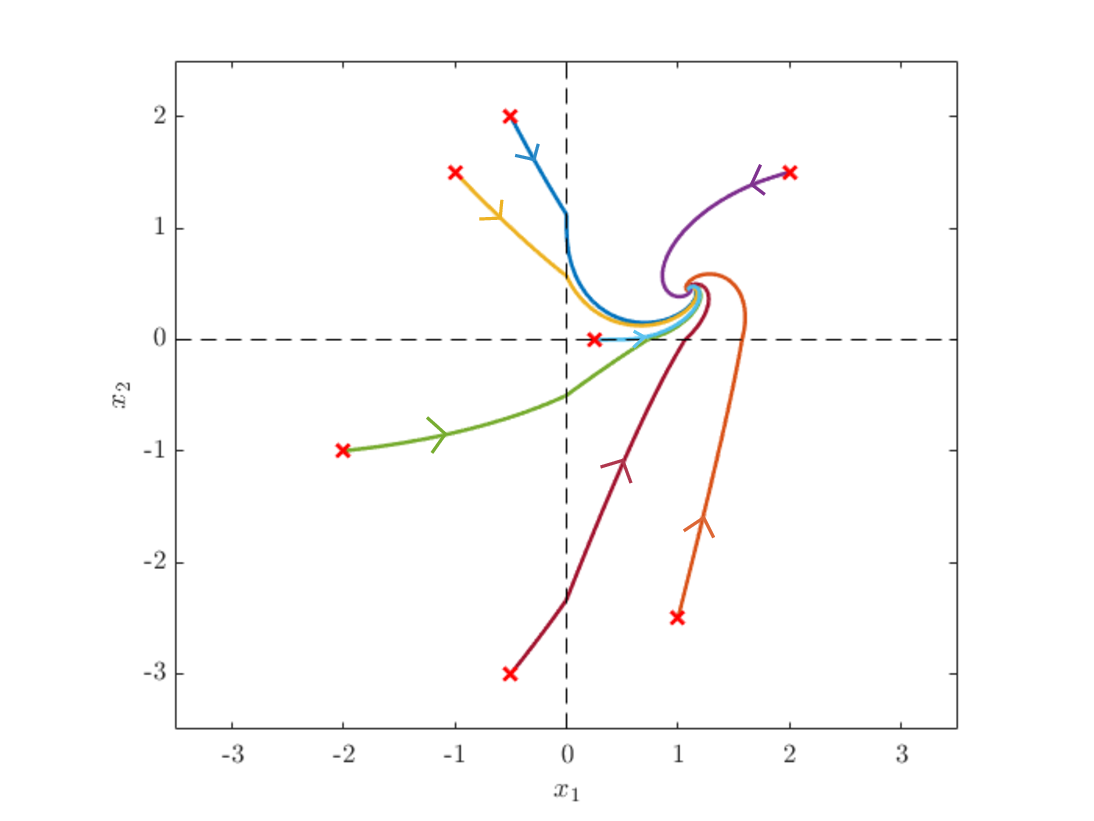}
   \caption{Evolution of the Filippov solution of Example 2 from different initial states with both sliding and crossing motions.}
    \label{fig:numerical.example2}
      \end{center}
\end{figure}

\section{Conclusion}
\label{sec:conclusion}
  In this paper, conditions to ensure contracting Filippov solutions for multi-modal PWS systems are provided. Compared to the contraction analysis of classical smooth dynamics, the sliding motion on multiple switching manifolds is taken into account in addition. Along this line,  the regularization-based approach, originally developed for the analysis of the sliding motion on single switching manifolds, is extended to a set of parallel, non-intersecting manifolds, and subsequently to intersecting manifolds. For the latter case, it is crucial to ensure that the intersections of switching manifolds do not belong to the sliding region, in order to avoid non-unique Filippov solutions. Future work aims at developing asymptotic tracking controllers for PWS systems by enforcing that the  controlled system is contractive, and that the reference trajectory belongs to the solutions of the controlled system.

\bibliography{IFACWC_Bibliography}             

\end{document}